\providecommand{\tabularnewline}{\\}
\DeclareRobustCommand{\lyxsout}[1]{\ifx\\#1\else\sout{#1}\fi}
\definecolor{darkred}{rgb}{0.8,0,0} 
\tikzset{ 
world/.style={rectangle, draw=black, rounded corners, text width=50pt, minimum height=28pt, text centered},
real/.style={rectangle, draw=black, ultra thick, rounded corners, text width=50pt, minimum height=27pt, text centered},
agent/.style={rectangle, rounded corners, text width=6pt, text height=-1pt, text centered},
agentpred/.style={rectangle, rounded corners, text width=6pt, text height=-1pt, text centered, fill=black!20},
event/.style={rectangle, draw=black, text width=50pt, minimum height=22pt, text centered},
realevent/.style={rectangle, draw=black, ultra thick, text width=50pt, minimum height=22pt, text centered},
%
every loop/.style={max distance=10mm,looseness=10},
}
\g@addto@macro \normalsize {%
 \setlength\abovedisplayskip{4pt plus 2pt minus 2pt}%
 \setlength\belowdisplayskip{4pt plus 2pt minus 2pt}%
}
\DeclareRobustCommand{\VAN}[2]{#2}
\begin{document}
\setstretch{1}
\title{Dynamic Term-Modal Logic for Epistemic Social Network Dynamics (Extended Version)}
\author{Andr\'{e}s Occhipinti Liberman\inst{1} and Rasmus K. Rendsvig\inst{2}
\institute{DTU Compute   \\ \email{aocc@dtu.dk} \and Center for Information and Bubble Studies, University of Copenhagen \\ \email{rasmus@hum.ku.dk}}}
\maketitle
\begin{abstract}
Logics for social networks have been studied in recent literature. This paper presents a framework based on \emph{dynamic term-modal logic} ($\mathsf{DTML}$), a quantified variant of dynamic epistemic logic (DEL). In contrast with DEL where it is commonly known to whom agent names refer, $\mathsf{DTML}$ can represent dynamics with uncertainty about agent identity. We exemplify dynamics where such uncertainty and \emph{de re}/\emph{de dicto} distinctions are key to social network epistemics. Technically, we show that $\mathsf{DTML}$ semantics can represent a popular class of hybrid logic epistemic social network models. We also show that $\mathsf{DTML}$ can encode previously discussed dynamics for which finding a complete logic was left open. As complete reduction axioms systems exist for $\mathsf{DTML}$, this yields a complete system for the dynamics in question.
\keywords{social networks, term-modal logic, dynamic epistemic logic}
\end{abstract}

\section{Introduction}

Over recent years, several papers have been dedicated to logical studies
of social networks, their epistemics and dynamics \cite{Baltag_etal_2018,ChristoffHansen2013,ChristoffHansen2015,ChristoffHansenProietti2016,ChristoffRendsvig2014,ChristoffNaumov2018,Liang2011,Liu2014,Rendsvig2014,Seligman2011,Seligman2013,SmetsVelazquez2017,SmetsVelazquez2018a}.
The purpose of this literature typically is to define and investigate
some social dynamics with respect to e.g. long-term stabilization
or other properties, or to introduce formal logics that capture some
social dynamics, or both.

This paper illustrates how \emph{dynamic term-modal logic} ($\mathsf{DTML}$,
\cite{ALR}) may be used for the second purpose. In general, \emph{term-modal
logics} are first-order modal logics where the index of modal operators
are first-order terms. I.e., the operators double as predicates to
the effect that e.g. $\exists xK_{x}N(x,a)$ is a formula\textemdash read,
in this paper, as ``there there exists an agent that knows of itself
that it is a social network neighbor of $a$''. The \emph{dynamic}
term-modal logic of \cite{ALR} extends term-modal logic with suitably
generalized action models that can effectuate both factual changes
(e.g. to the network structure) as well as epistemic changes. For
all the $\mathsf{DTML}$ action model encodable dynamics, \cite{ALR} presents
a general sound and complete reduction axiom-based logic in the style
of \emph{dynamic epistemic logic} (DEL, \cite{BaltagBMS_1998,baltagmoss2004}). Hence, whenever
an epistemic social network dynamics is encodable using $\mathsf{DTML}$, completeness
follows. 
With this in mind, the main goal of this paper is to introduce and illustrate $\mathsf{DTML}$ as a formalism for representing epistemic social network dynamics, and to show how it may be used to obtain completeness results.

To this end, the paper progresses as follows. Sec. \ref{sec:DTML}
sketches some common themes in the logical literature on social
networks before introducing $\mathsf{DTML}$ and its application to
epistemic social networks. Sec. \ref{sec:DTML} contains the bulk
of the paper, with numerous examples of both static $\mathsf{DTML}$ models and action models. The examples are both meant to showcase the scope
of $\mathsf{DTML}$ and to explain the more non-standard technical
details involved in calculating updated models. In Sec. \ref{sec:hard.results},
we turn to technical results, where it is shown that $\mathsf{DTML}$
may encode popular static hybrid logical models of epistemic networks, as well as  
the dynamics of \cite{ChristoffHansenProietti2016}, for which finding a complete logic was left open.
Sec. \ref{sec:final} contains final remarks.

\section{Models and Languages for Epistemic Social Networks\label{sec:DTML}}

To situate $\mathsf{DTML}$ in the logical literature on social networks, we cannot
but describe the literature in broad terms. We omit both focus, formal
details and main results of the individual contributions in favor
of a broad perspective. That said, then all relevant literature in
one way or other concern \emph{social networks}. In general, a \textbf{social
network} is a graph $(A,N)$ where $A$ is a set of agents and $N\subseteq A\times A$
is represents a social relation, e.g., being friends on some social
media platform. Depending on interpretation, $N$ may be assumed irreflexive
and symmetric. Social networks may be augmented with assignments of
atomic properties to agents, representing e.g. behaviors, opinions
or beliefs. One set of papers investigates such models and their dynamics
using fully propositional static languages \cite{Rendsvig2014,ChristoffNaumov2018,SmetsVelazquez2017,SmetsVelazquez2018a}.

A second set of papers combines social networks with a semantically
represented epistemic dimension in the style of epistemic logic. In
these works, the fundamental structure of interest is (akin to) a
tuple 
\[
(A,W,\{N_{w}\}_{w\in W},\sim)
\]
with agents $A$ and worlds $W$, with each world $w$ associated
with a network $N_{w}\subseteq A\times A$, and $\sim:A\rightarrow\mathcal{P}(W\times W)$
associating each agent with an indistinguishability (equivalence) relation
$\sim_{a}$. Call such a tuple an \textbf{epistemic network structure}.

The existing work on epistemic network structures may be organized
in terms of the static languages they work with: \emph{propositional
modal logic} \cite{Baltag_etal_2018,ChristoffRendsvig2014} or \emph{hybrid
logic} \cite{Liang2011,Liu2014,Seligman2013,Seligman2011,Christoff2016,ChristoffHansen2013,ChristoffHansen2015,ChristoffHansenProietti2016}.
In the former, the social network is described using designated atomic
propositions (e.g., $N_{ab}$ for `$b$ is a neighbor of $a$').
To produce a model, an epistemic network structure is augmented with
a propositional valuation $V:P\to\mathcal{P}(W)$. Semantically, $N_{ab}$
is then true at $w$ iff $(a,b)\in N_{w}$.%
{} Knowledge is expressed using operators $\{K_{a}\}_{a\in A}$ as
in standard epistemic logic with $K_{a}$ the Kripke modality for
$\sim_{a}$.

In the hybrid case, the network is instead described using modal operators.
The hybrid languages typically include a set of agent nominals $Nom$
(agent names), atoms $P$ and \emph{indexical} modal operators
$K$ and $N$, read ``I know that'' and ``all my neighbors''. Some
papers additionally include state nominals, hybrid operators ($@_{x}$,
$\downarrow_{x}$) and/or universal modalities $U$ (``for all agents''). A \textbf{hybrid network
model} is an epistemic network structures extended with two assignments:
a nominal assignment $g:Nom\to A$ that names agents, and a two-dimensional
hybrid valuation $V:P\to \mathcal{P}(W\times A)$, where $(w,a)\in V(p)$ represents
that the indexical proposition $p$ holds of agent $a$ at $w$. The
satisfaction relation is relative to \textit{both} an epistemic alternative
$w$ and an agent $a$, where the noteworthy clause are: $M,w,a\models p$
iff $(w,a)\in V(p)$; $M,w,a\models K\varphi$ iff $M,v,a\models\varphi$
for every $v\sim_{a}w$; and $M,w,a\models N\varphi$ iff $M,w,b\models\varphi$
for every $b$ such that $N_w(a,b)$. With these semantics, formulas are read
indexically. E.g. $KNp$ reads as ``I know that all my neighbors are $p$''.

In relation to these two language types, the term-modal approach of
this paper lies closer to the former: By including a binary `neighbor
of' relation symbol $N$ in the signature of a term-modal
language, the social network component of models is described non-modally.
This straightforwardly allows expressing e.g. that that all agents
know all their neighbors ($\forall x\forall y(N(x,y)\rightarrow K_{x}(N(x,y))$)
or that an agent has \emph{de re} vs. \emph{de dicto }knowledge of
someone being a neighbor ($\exists xK_{\underline{a}}N(\underline{a},x)$ vs. $K_{\underline{a}}\exists xN(\underline{a},x)$). Moreover, hybrid languages can be translated into $\mathsf{DTML}$, in such a way that hybrid formulas such as $@_a p$ (``agent $a$ has property $p$'') become equivalent to $P(\underline{a})$, if $\underline{a}$ is the name of $a$.

\subsection{Term-Modal Logic and Epistemic Network Structures}

In general, term-modal languages may be based on any first-order signature,
by for the purposes of representing social networks and factual properties
of agents, we limit attention to the following:\footnote{The defined are special cases of the setting in \cite{ALR}, which allows general signatures and non-agent terms. \cite{ALR} also reviews the term-modal literature.}
\begin{definition}
\emph{A }\textbf{\emph{signature }}\emph{is a tuple $\Sigma=(\mathtt{V},\mathtt{C},\mathtt{P},N,\doteq)$
with $\mathtt{V}$ a countably infinite set of variables, $\mathtt{C}$
and $\mathtt{P}$ countable sets of constants and unary predicates,
$N$ a binary relation symbol and $\doteq$ for identity. The }\textbf{\emph{terms}}\emph{
of $\Sigma$ are $\mathtt{T}\coloneqq\mathtt{V}\cup\mathtt{C}$. With
$t_{1},t_{2}\in\mathtt{T},x\in\mathtt{V}$ and $P\in\mathtt{P}$,
the }\textbf{\emph{language}}\emph{ $\mathcal{L}(\Sigma)$ is given
by 
\[
\varphi\coloneqq P(t_{1})\mid N(t_{1},t_{2})\mid(t_{1}\doteq t_{2})\mid\neg\varphi\mid\varphi\wedge\varphi\mid K_{t}\varphi\mid\forall x\varphi
\]
Standard Boolean connectives, $\top$, $\exists$ and $\hat{K}_{t}$
are defined per usual. With $\varphi\in\mathcal{L},t\in\mathtt{T}$,
$x\in\mathtt{V}$, the result of replacing all occurrences of $x$
in $\varphi$ with $t$ is denoted $\varphi(x\mapsto t)$. Formulas
from the first three clauses are called }\textbf{\emph{atoms}}\emph{;
if an atom contains no variables, it is }\textbf{\emph{ground}}\emph{.}
\end{definition}

Throughout, $\underline{a},\underline{b}$, etc. are used for constants
and the relation symbol $N$ denotes a social network. The reading
of $N(t_{1},t_{2})$ depends on application. $K_{t}\varphi$ is a
term-indexed epistemic operator which read as ``agent $t$ knows
that $\varphi$''. $\mathcal{L}(\Sigma)$ neither enforces nor requires
a fixed-size agent set $A$, in contrast with standard epistemic languages,
where the set of operators is given by reference to $A$. Hence the
same language may be used to describe networks of varying size.

To interpret $\mathcal{L}(\Sigma)$, we use \emph{constant-domain}
models (the same number of agents in each world) with \emph{non-rigid}
constants (names, like predicates and relations, may change extension
between worlds; this allows for uncertainty about agent identity). See Figs. \ref{fig:static1} and \ref{ex2: pt1} for examples of such models.  

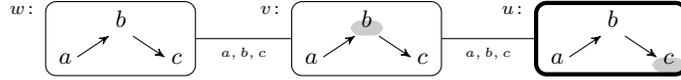
\begin{figure}
\begin{centering}
\begin{tikzpicture}[->,>=stealth',auto,shorten >=0pt,shorten <=0pt]
		\node[agent] (a0) {{\small$a$}};
		\node[agent, right=8pt of a0, yshift=14pt] (b) {{\small$b$}};
		\node[agent, right=8pt of b, yshift=-14pt] (c) {{\small$c$}};
		\draw (a0) -- (b);
		\draw (b) -- (c);
		\node[world, left=0pt of a0,xshift=56pt,yshift=10pt, label={[label distance=-2pt]168:\scriptsize{$w\colon$}}] (w1) {};

		\node[agent, right=80pt of a0] (a) {{\small$a$}};
		\node[agentpred, right=8pt of a, yshift=14pt] (b) {{\small$b$}};
		\node[agent, right=8pt of b, yshift=-14pt] (c) {{\small$c$}};
		\draw (a) -- (b);
		\draw (b) -- (c);
		\node[world, left=0pt of a,xshift=56pt,yshift=10pt, label={[label distance=-2pt]168:\scriptsize{$v\colon$}}] (w2) {};

		\node[agent, right=80pt of a] (a) {{\small$a$}};
		\node[agent, right=8pt of a, yshift=14pt] (b) {{\small$b$}};
		\node[agentpred, right=8pt of b, yshift=-14pt] (c) {{\small$c$}};
		\draw (a) -- (b);
		\draw (b) -- (c);
		\node[real, left=0pt of a,xshift=56pt,yshift=10pt, label={[label distance=-2pt]168:\scriptsize{$u\colon$}}] (w3) {};
	 	\draw[-] (w1) -- (w2) node[below,midway] {{\tiny $a,b,c$}};
	 	\draw[-] (w2) -- (w3) node[below,midway] {{\tiny $a,b,c$}};
\end{tikzpicture}
\par\end{centering}
\caption{\textbf{Example 1, pt. 1 (Server Error).}\label{fig:static1} Three
agents $a$,$b$ and $c$ work in a company with a hierarchical command
structure, $\protect\longrightarrow$: $a$ is the direct boss of
\textbf{$b$}, who is the direct boss of $c$.
The server has thrown
an error after both $b$ and $c$ tampered with it. Either $w)$ the
server failed spontaneously, $v)$ $b$ made a mistake (marked by
gray) or $u)$ $c$ made a mistake. Lines represent indistinguishability
with reflexive and transitive links omitted. There is no uncertainty
about the hierarchy, but nobody knows why the server failed. In fact,
$c$ made a mistake: the actual world has a thick outline.\vspace{-16pt}} 
\end{figure}

\begin{definition}
\emph{\label{model} An }$\mathcal{L}(\Sigma)$-\textbf{\emph{model}}\emph{
is a tuple $M=(A,W,\sim,I)$ where $A$ is a non-empty }\textbf{\emph{domain
of agents}}\emph{, $W$ is a non-empty set of }\textbf{\emph{worlds}}\emph{,
$\sim\,:A\rightarrow\mathcal{P}(W\times W)$ assigns to each agent
$a\in A$ an }\textbf{\emph{equivalence relation}}\emph{ on $W$ denoted
$\sim_{a}$, and $I$ is an }\textbf{\emph{interpretation}}\emph{
satisfying, for all $w\in W$, 1. for $c\in\mathtt{C}$, $I(c,w)\in A$;
2. for $P\in\mathtt{P}$, $I(P,w)\subseteq A$; 3. $I(N,w)\subseteq A\times A$}%
\emph{. A }\textbf{\emph{pointed model}}\emph{ is a pair $(M,w)$
with $w\in W$ called the }\textbf{\emph{actual world}}\emph{.}\\
\emph{A variable }\textbf{\emph{valuation }}\emph{of $\Sigma$ over
$M$ is a map $g:\mathtt{V}\rightarrow A$.}%
\emph{ The valuation identical to $g$ except mapping $x$ to $a$
is denoted }\textup{\emph{$g[x\mapsto a]$. }}\emph{The }\textbf{\emph{extension
}}\emph{of the term $t\in\mathtt{T}$ at $w$ in $M$ under $g$ is
$\left\llbracket t\right\rrbracket _{w}^{I,g}=g(t)$ for $t\in\mathtt{V}$
and $\left\llbracket t\right\rrbracket _{w}^{I,g}=I(t,w)$ for $t\in\mathtt{C}$.}
\end{definition}

\noindent Given the inclusion of $N$ in the signature $\Sigma$,
each $\mathcal{L}(\Sigma)$-model embeds an epistemic network structure
\textsc{$(A,W,(\sim_{a})_{a\in A},(I(N,w))_{w\in W})$.}

Formulas are evaluated over pointed models using a direct combination
of first-order and modal semantics:
\begin{definition}
\label{def:satisfaction}\emph{Let $\Sigma$, $M$ and $g$ be given.
The }\textbf{\emph{satisfaction}}\emph{ of formulas of $\mathcal{L}(\Sigma)$
is given recursively by}

\emph{$M,w\vDash_{g}P(t_{1})$ iff $\left\llbracket t_{1}\right\rrbracket _{w}^{I,g}\in I(P,w)$,
for $P\in\mathtt{P}$.}

\emph{$M,w\vDash_{g}N(t_{1},t_{2})$ iff $(\left\llbracket t_{1}\right\rrbracket _{w}^{I,g},\left\llbracket t_{2}\right\rrbracket _{w}^{I,g})\in I(N,w)$.}

\emph{$M,w\vDash_{g}(t_{1}\doteq t_{2})$ iff $\left\llbracket t_{1}\right\rrbracket _{w}^{I,g}=\left\llbracket t_{2}\right\rrbracket _{w}^{I,g}$.}

\emph{$M,w\vDash_{g}\neg\varphi$ iff not $M,w\vDash_{g}\varphi$.}

\emph{$M,w\vDash_{g}\varphi\wedge\psi$ iff $M,w\vDash_{g}\varphi$
and $M,w\vDash_{g}\psi$.}

\emph{$M,w\models_{g}\forall x\varphi$ iff $M,w\models_{g[x\mapsto a]}\varphi$
for all $a\in A$.}

\emph{$M,w\vDash_{g}K_{t}\varphi$ iff $M,w'\vDash_{g}\varphi$ for
all $w'$ such that $w\sim_{\left\llbracket t\right\rrbracket _{w}^{I,g}}w'$.}
\end{definition}
\pagebreak
\subsection{Knowing Who and Knowledge \emph{De Dicto} and \emph{De Re}}

First-order modal languages can represent propositional attitudes
\emph{de dicto }(about the statement) and \emph{de re} (about the
thing) in principled manners. For example, $K_{\underline{a}}\exists xP(x)$
is a \emph{de dicto} statement: knowledge is expressed about the proposition
that a $P$-thing exists. In contrast, $\exists xK_{\underline{a}}P(x)$
is a \emph{de re} statement: it is expressed that of some thing $x$,
that $x$ is known to be a $P$-thing. In general, \emph{de re} statements
are stronger than \emph{de dicto} statements. The difference has been
appreciated in epistemic logic since Hintikka's seminal \cite{TML_Hintikka1962},
where he argues that $\exists xK_{\underline{a}}(x\doteq\underline{b})$
expresses that $a$ \emph{knows who $b$ is} (see Fig. \ref{ex2: pt1}).
Semantically, the formula ensures that the constant $\underline{b}$
refers to the same individual in all $\underline{a}$'s epistemic
alternatives (i.e., $\underline{b}$ is \emph{locally rigid}). 
Both \emph{de dicto} and \emph{de re} statements may partially be expressed in propositional
languages (e.g. \emph{de dicto} $K_{a}(p_{b}\vee p_{c})$ vs. \emph{de
re} $K_{a}p_{b}\vee K_{a}p_{c}$; see \cite{Baltag_etal_2018} for
such a usage), but not in a principled manner: the required formulas
will depend on the specific circumstances.\vspace{-14pt}
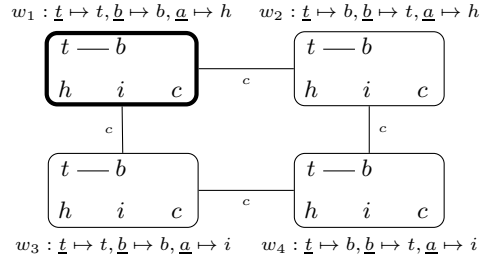
\begin{figure}[H]
\begin{centering}
\begin{tikzpicture}
	\node[agent] (t0) {{\small$t$}};
	\node[agent, right=8pt of t0] (b) {{\small$b$}};
	\node[agent, below=10pt of t0] (a) {{\small$h$}};
	\node[agent, right=8pt of a] (f) {{\small$i$}};
	\node[agent, right=8pt of f] (c) {{\small$c$}};
	\draw[shorten <=-2pt,shorten >=-2pt] (t0.20) -- (b.160); 
	\node[real, left=0pt of t0,xshift=57pt,yshift=-5pt, label={[label distance=0pt]90:\scriptsize{$w_1: \underline{t}\mapsto t, \underline{b} \mapsto b, \underline{a} \mapsto h$}}] (w1) {};

	\node[agent, right= 80pt of t0] (t1) {{\small$t$}};
	\node[agent, right=8pt of t1] (b) {{\small$b$}};
	\node[agent, below=10pt of t1] (a) {{\small$h$}};
	\node[agent, right=8pt of a] (f) {{\small$i$}};
	\node[agent, right=8pt of f] (c) {{\small$c$}};
	\draw[shorten <=-2pt,shorten >=-2pt] (t1.20) -- (b.160); 
	\node[world, left=0pt of t1, xshift=57pt,yshift=-5pt, label={[label distance=0pt]90:\scriptsize{$w_2: \underline{t}\mapsto b, \underline{b} \mapsto t, \underline{a} \mapsto h$}}] (w2) {};
	
	\node[agent, below= 40pt of t0] (t2) {{\small$t$}};
	\node[agent, right=8pt of t2] (b) {{\small$b$}};
	\node[agent, below=10pt of t2] (a) {{\small$h$}};
	\node[agent, right=8pt of a] (f) {{\small$i$}};
	\node[agent, right=8pt of f] (c) {{\small$c$}};
	\draw[shorten <=-2pt,shorten >=-2pt] (t2.20) -- (b.160); 
	\node[world, left=0pt of t2,xshift=57pt,yshift=-5pt, label={[label distance=-42pt]90:\scriptsize{$w_3: \underline{t}\mapsto t, \underline{b} \mapsto b, \underline{a} \mapsto i$}}] (w3) {};

	\node[agent, below= 40pt of t1] (t3) {{\small$t$}};
	\node[agent, right=8pt of t3] (b) {{\small$b$}};
	\node[agent, below=10pt of t3] (a) {{\small$h$}};
	\node[agent, right=8pt of a] (f) {{\small$i$}};
	\node[agent, right=8pt of f] (c) {{\small$c$}};
	\draw[shorten <=-2pt,shorten >=-2pt] (t3.20) -- (b.160); 
	\node[world, left=0pt of t3,xshift=57pt,yshift=-5pt, label={[label distance=-42pt]90:\scriptsize{$w_4: \underline{t}\mapsto b, \underline{b} \mapsto t, \underline{a} \mapsto i$}}] (w4) {};	
	
	 \draw[-] (w1) -- (w2) node[below,midway] {{\tiny $c$}};
	 \draw[-] (w2) -- (w4) node[right,midway] {{\tiny $c$}};
	 \draw[-] (w1) -- (w3) node[left,midway] {{\tiny $c$}};
	 \draw[-] (w3) -- (w4) node[below,midway] {{\tiny $c$}};		
\end{tikzpicture}%
\par\end{centering}
\centering{}\caption{\textbf{Example 2, pt.1 (Knowing Who).}\label{ex2: pt1} Two thieves,
$t$ and $b$, hide in a building with hostages $h$ and $i$. Outside,
a cop, $c$, waits. To communicate safely, the thieves use \textit{code
names} `Tokyo' and `Berlin' for each other and `The Asset' for
the specially valuable hostage $h$. Agents $t,b,h$ and $i$ all know whom
the code names denote (the names are rigid for them), but the cop
does not. The \textit{\emph{code names }}are $\underline{t}$ for
$t$, $\underline{b}$ for $b$ and $\underline{a}$ for $h$. Known
by all, $h$ and $i$ are in fact called $\underline{h}$ and $\underline{i}$. The thief network (\textemdash ) is assumed symmetric and transitive.
The case is modeled using four worlds, identical up to code name denotation,
(shown by $\protect\mapsto$). E.g., in the actual world is $w_{1}$,
$\underline{t}$ names $t$, but in $w_{4}$, it names $b$. Hence
the cop does not know who Tokyo is: $M,w_{1}\vDash_{g}\neg\exists xK_{\underline{c}}(x\protect\doteq\underline{t})$.}
\end{figure}

\subsection{Dynamics: Action Models and Product Update}

To code operations on static models, we use a a variant of DEL-style
action models, adapted to term-modal logic (see Fig. \ref{fig:event1}).
They include (adapted versions of) \textit{preconditions} specifying
when an event is executable (\cite{BaltagBMS_1998,baltagmoss2004}),
\textit{postconditions} describing the factual effects of events (\cite{ditmarsch_kooi_ontic,benthem2006_com-change,bolanderbirkegaard2011})
as well as \textit{edge-conditions} representing how an agent's observation
of an action depends on the agent's circumstances (\cite{Bolander2014})\textemdash for
example their position in a network, cf. Fig. \ref{fig:event1}. Edge-conditions
are non-standard and deserve a remark. With $E$ the set of events,
edge-conditions are assigned by a map $Q$. For each edge $(e,e')\in E\times E$,
$Q(e,e')$ is a formula with a single free variable $x^{\star}$.
Given a model $M$, an agent $i$ cannot distinguish $e$ from $e'$
iff the edge-condition $Q(e,e')$ is true in $M$\emph{ }\textit{\emph{when
the free variable $x^{\star}$ is mapped to $i$}}. Intuitively, if
the situation described by the edge-condition is true for $i$, the
way in which $i$ is observing the action does not allow her to tell
whether $e$ or $e'$ is taking place. See Figure \ref{fig:static2} for
an example. See \cite{ALR} for a comparison of this approach to
that of \cite{Bolander2014} and the term-modal action models of \cite{TML_KooiTermModalLogic}.
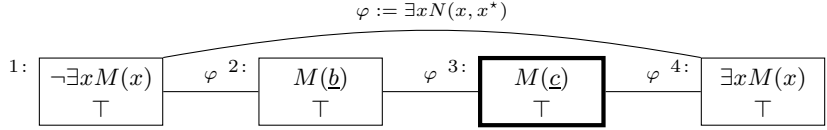
\begin{figure}[h]
\begin{centering}
\begin{tikzpicture}[->,>=stealth',auto,shorten >=0pt,shorten <=0pt,label distance=-2pt]]
	\node[event, text width = 40pt, label={170:\scriptsize{$1\colon$}}] (e1) {{\small$\neg\exists x M(x)$\\$\top$}};
	\node[event, text width = 40pt, right=36pt of e1, label={170:\scriptsize{$2\colon$}}] (e2) {{\small$M(\underline{b})$\\$\top$}};
	\node[realevent, text width = 40pt, right=36pt of e2, label={170:\scriptsize{$3\colon$}}] (e3) {{\small$M(\underline{c})$\\$\top$}};
	\node[event, text width = 40pt, right=36pt of e3, label={170:\scriptsize{$4\colon$}}] (e4) {{\small$\exists x M(x)$\\$\top$}};
	
	\draw[-] (e1) -- (e2) node[above,midway] {{\scriptsize $\varphi$}};
	\draw[-] (e2) -- (e3) node[above,midway] {{\scriptsize $\varphi$}};
	\draw[-] (e3) -- (e4) node[above,midway] {{\scriptsize $\varphi$}};
	\draw[-] (e1.north east) to [bend left=10] node[above,midway] {{\scriptsize $\varphi:= \exists x N(x,x^\star)$}} (e4.north west);
\end{tikzpicture}
\par\end{centering}
\caption{\textbf{Example 1, pt. 2 (Edge-Conditions: Announcement to Subgroup).}\label{fig:event1}
To learn what happened to the server, the top boss $a$ requests its
log file. The log holds one of four pieces of information: $1)$ Nobody
made a mistake, $2)$ $b$ made a mistake ($M$), $3)$ $c$ made
a mistake or $4)$ somebody made a mistake. Each box represents one
of these events: top lines are preconditions, bottom lines postconditions
($\top$ means no factual change). In fact, the log rats on $c$.
$N$ denotes the hierarchy. The log is send only to the top boss: the others cannot see its content.
This is represented by the edge-condition $\varphi$: If you, $x^{\star}$,
have a boss, then you cannot tell $1)$ from $2)$ nor $2)$ from
$3)$ etc. For unillustrated edges, $Q(e,e)=(x^{\star}\protect\doteq x^{\star})$
and $Q(e,e')=\varphi$ when $e\protect\neq e'$.\vspace{-12pt}}
\end{figure}

For simplicity, we here only define action models that take pre-,
post, and edge-conditions in the static language $\mathcal{L}(\Sigma)$.
However, dynamic conditions are needed for completeness; we refer
to \cite{ALR} for details.
\begin{definition}
\textup{An }\textbf{\textup{action model}}\textup{ for $\mathcal{L}(\Sigma)$
is }a tuple $\Delta=(E,Q,\mathsf{pre},\mathsf{post})$ where
\begin{itemize}
\item \emph{$E$ is a non-empty, finite set of }\textbf{\emph{events}}\emph{.}
\item \emph{$Q:(E\times E)\to\mathcal{L}(\Sigma)$ where each }\textbf{\emph{edge-condition}}\emph{
$Q(e,e')$ has exactly one free variable $x^{\star}$.}
\item \emph{$\mathsf{pre}:E\to\mathcal{L}(\Sigma)$ where each }\textbf{\emph{precondition}}\emph{
}\textup{\emph{$\mathsf{pre}(e)$}}\emph{ has no free variables.}
\item \emph{$\mathsf{post}:E\to(\mathtt{GroundAtoms}(\mathcal{L}(\Sigma))\to\mathcal{L}(\Sigma))$
assigns to each $e\in E$ a }\textbf{\emph{postcondition}}\emph{ for
each ground atom.}\\
\emph{To preserve the meaning of equality, let $\mathsf{post}(e)(t\doteq t)=\top$
for all $e\in E$.}
\end{itemize}
\end{definition}

\noindent With no general restrictions on $Q$, to ensure that all
agents' indistinguishability relations continue to be equivalence
relations after updating, $Q$ must be chosen with care. Throughout,
we assume $Q(e,e)=(x^{\star}\doteq x^{\star})$ for all $e\in E$.
To update, \emph{product update} may be altered to fit the edge-condition
term-modal setting as below. Fig. \ref{fig:static2} illustrates the
product update of Figs. \ref{fig:static1} with \ref{fig:event1}. The use
of postconditions is illustrated in Figs. \ref{fig:event4+static5} and \ref{ex2: pt2-1}.

\pagebreak

\begin{definition}
\emph{\label{Def. Product update} Let $M=(A,W,\sim,I)$ and $\Delta=(E,Q,\mathsf{pre},\mathsf{post})$
be given. The }\textbf{\emph{product update}}\emph{ of $M$ and $\Delta$
is the model $M\otimes\Delta=(A',W',\sim',I')$ where}
\begin{enumerate}
\item \emph{$A'=A$}
\item \emph{$W'=\{(w,e)\in W\times E\colon(M,w)\vDash_{g}\mathsf{pre}(e)\}$
for any $g$,}
\item \emph{$(w,e)\sim'_{i}(w',e')$ iff $w\sim_{i}w'$ and $M,w\vDash_{g[x^{\star}\mapsto i]}Q(e,e')$,}
\item \emph{$I'(c,(w,e))=I(c,w)$ for all $c\in\mathtt{C}$, and}\\
\emph{$I'(X,(w,e))=(I(X,w)\cup X^{+}(w))\setminus X^{-}(w)$, for
$X=\{P,N\},P\in\mathtt{P},$ where: 
\begin{align*}
P^{+}(w)\coloneqq & \{\llbracket t\rrbracket_{w}^{I,v}\colon(M,w)\vDash_{g}\mathsf{post}(e)(P(t))\};\\
P^{-}(w)\coloneqq & \{\llbracket t\rrbracket_{w}^{I,v}\colon(M,w)\not\vDash_{g}\mathsf{post}(e)(P(t))\};\\
N^{+}(w)\coloneqq & \{(\llbracket t_{1}\rrbracket_{w}^{I,v},\llbracket t_{2}\rrbracket_{w}^{I,v})\colon(M,w)\vDash_{g}\mathsf{post}(e)(N(t_{1},t_{2}))\};\\
N^{-}(w)\coloneqq & \{(\llbracket t_{1}\rrbracket_{w}^{I,v},\llbracket t_{2}\rrbracket_{w}^{I,v})\colon(M,w)\not\vDash_{g}\mathsf{post}(e)(N(t_{1},t_{2}))\}
\end{align*}
}
\end{enumerate}
\emph{If $(M,w)\models\mathsf{pre}(e)$, then $(A,e)$ is }\textbf{\emph{applicable}}\emph{
to $(M,w)$, and the product update of the two is the pointed model
$(M\otimes\Delta,(w,e))$. Else it is undefined.}
\end{definition}

\noindent 
\begin{figure}[h]
\begin{centering}
\begin{tikzpicture}[->,>=stealth',auto,shorten >=0pt,shorten <=0pt]
		\node[agent] (a0) {{\small$a$}};
		\node[agent, right=8pt of a0, yshift=14pt] (b) {{\small$b$}};
		\node[agent, right=8pt of b, yshift=-14pt] (c) {{\small$c$}};
		\draw (a0) -- (b);
		\draw (b) -- (c);
		\node[world, left=0pt of a0,xshift=56pt,yshift=10pt, label={[label distance=-2pt]168:\scriptsize{$w1\colon$}}] (w1) {};
		\node[agent, right=80pt of a0, yshift=23pt] (a1) {{\small$a$}};
		\node[agentpred, right=8pt of a1, yshift=14pt] (b) {{\small$b$}};
		\node[agent, right=8pt of b, yshift=-14pt] (c) {{\small$c$}};
		\draw (a1) -- (b);
		\draw (b) -- (c);
		\node[world, left=0pt of a1,xshift=56pt,yshift=10pt, label={[label distance=-2pt]168:\scriptsize{$v2\colon$}}] (v2) {};

		\node[agent, right=80pt of a1] (a) {{\small$a$}};
		\node[agent, right=8pt of a, yshift=14pt] (b) {{\small$b$}};
		\node[agentpred, right=8pt of b, yshift=-14pt] (c) {{\small$c$}};
		\draw (a) -- (b);
		\draw (b) -- (c);
		\node[real, left=0pt of a,xshift=56pt,yshift=10pt, label={[label distance=-2pt]168:\scriptsize{$u3\colon$}}] (u3) {};

		\node[agent, below=40pt of a1] (a) {{\small$a$}};
		\node[agentpred, right=8pt of a, yshift=14pt] (b) {{\small$b$}};
		\node[agent, right=8pt of b, yshift=-14pt] (c) {{\small$c$}};
		\draw (a) -- (b);
		\draw (b) -- (c);
		\node[world, left=0pt of a,xshift=56pt,yshift=10pt, label={[label distance=-2pt]168:\scriptsize{$v4\colon$}}] (v4) {};

		\node[agent, right=80pt of a] (a) {{\small$a$}};
		\node[agent, right=8pt of a, yshift=14pt] (b) {{\small$b$}};
		\node[agentpred, right=8pt of b, yshift=-14pt] (c) {{\small$c$}};
		\draw (a) -- (b);
		\draw (b) -- (c);
		\node[world, left=0pt of a,xshift=56pt,yshift=10pt, label={[label distance=-2pt]168:\scriptsize{$u4\colon$}}] (u4) {};
	%
	 	\draw[-] (w1) -- (v2.180) node[below,midway] {{\tiny $b,c$}};
	 	\draw[-] (v2) -- (u3) node[below,midway] {{\tiny $b,c$}};
		\draw[-] (w1) -- (v4.180) node[below,midway] {{\tiny $b,c$}}; 	
		\draw[-] (v4) -- (u4) node[below,midway] {{\tiny $a,b,c$}};
	  	\draw[-] (v2) -- (v4) node[right,midway] {{\tiny $b,c$}};
	  	\draw[-] (u3) -- (u4) node[right,midway] {{\tiny $b,c$}};
\end{tikzpicture}\vspace{-8pt}
\par\end{centering}
\caption{\textbf{Example 1, pt. 3 (Product Update: Edge-Conditions).}\label{fig:static2}
The product update of Fig. \ref{fig:static1} and Fig. \ref{fig:event1}.
After checking the logs, the boss has learned that $c$ made a mistake,
while both $b$ and $c$ are now both uncertain about this, as well
as about the boss' information. Worlds are named using by the world-event
pair they represent: $w1$ is the child of $w$ and $1$, etc. The
pair $w2$ is not a world: $w$ did not satisfy the precondition of
$1$. We have $w1\sim'_{b}v2$ as $w\sim_{b}v$ and $M,w\vDash_{g[x^{\star}\protect\mapsto a]}Q(1,2)$\textemdash as
$M,w\vDash_{g}\exists xN(x,\underline{b})$. Likewise, $v2\sim'_{b}w1$
as $v\sim_{b}w$ and $M,v\vDash_{g}\exists xN(x,\underline{b})$.
That $w1\protect\not\sim'_{a}v2$ follows as $M,w\vDash_{g}\neg\exists xN(x,\underline{b})$,
but $v4\sim'_{a}u4$ as $M,v\vDash_{g}(\underline{a}\protect\doteq\underline{a})$.
The same reason, reflexive loops are preserved. The boss now knows
that $c$ made a mistake: $K_{\underline{a}}M(\underline{c})$.\vspace{-12pt}}
\end{figure}
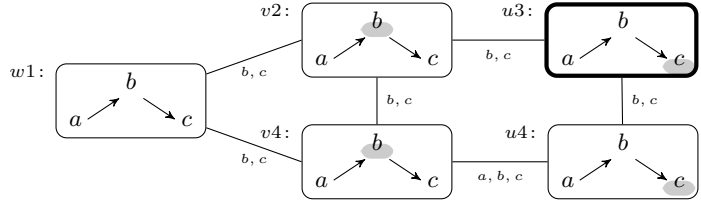

\subsection{Announcements \emph{De Dicto} and \emph{De Re}}

With \emph{de dicto} and \emph{de re} statements expressible in $\mathsf{DTML}$,
they may be used to define principled announcements, as exemplified
in Fig. \ref{fig:event2+static3.de.dicto} and \ref{fig:event3+static4.de.re}.
The action models are applicable to \emph{any} $\mathsf{DTML}$ model for a signature
that includes the constant $\underline{a}$ and the predicate $M$,
irrespective of the size of the set of agents. This level of general
applicability is not mirrored in standard DEL action models.\vspace{-12pt}\enlargethispage{2\baselineskip}
\begin{figure}[H]
\begin{centering}
\begin{tabular}{>{\centering}m{0.33\textwidth}|>{\centering}m{0.63\textwidth}}
\begin{tikzpicture}[->,>=stealth',auto,shorten >=0pt,shorten <=0pt]
	\node[realevent, label={[label distance=-2pt]168:\scriptsize{$e\colon$}}] (e1) {{\small$K_{\underline{a}}\exists x M(x)$\\ $\top$}};
	\end{tikzpicture} & \begin{tikzpicture}[->,>=stealth',auto,shorten >=0pt,shorten <=0pt]
		\node[agent] (a1) {{\small$a$}};
		\node[agentpred, right=8pt of a1, yshift=14pt] (b) {{\small$b$}};
		\node[agent, right=8pt of b, yshift=-14pt] (c) {{\small$c$}};
		\draw (a1) -- (b);
		\draw (b) -- (c);
		\node[world, left=0pt of a1,xshift=56pt,yshift=10pt, label={[label distance=-2pt]168:\scriptsize{$v2e\colon$}}] (v2) {};

		\node[agent, right=80pt of a1] (a) {{\small$a$}};
		\node[agent, right=8pt of a, yshift=14pt] (b) {{\small$b$}};
		\node[agentpred, right=8pt of b, yshift=-14pt] (c) {{\small$c$}};
		\draw (a) -- (b);
		\draw (b) -- (c);
		\node[real, left=0pt of a,xshift=56pt,yshift=10pt, label={[label distance=-2pt]168:\scriptsize{$u3e\colon$}}] (u3) {};

		\node[agent, below=40pt of a1] (a) {{\small$a$}};
		\node[agentpred, right=8pt of a, yshift=14pt] (b) {{\small$b$}};
		\node[agent, right=8pt of b, yshift=-14pt] (c) {{\small$c$}};
		\draw (a) -- (b);
		\draw (b) -- (c);
		\node[world, left=0pt of a,xshift=56pt,yshift=10pt, label={[label distance=-2pt]168:\scriptsize{$v4e\colon$}}] (v4) {};

		\node[agent, right=80pt of a] (a) {{\small$a$}};
		\node[agent, right=8pt of a, yshift=14pt] (b) {{\small$b$}};
		\node[agentpred, right=8pt of b, yshift=-14pt] (c) {{\small$c$}};
		\draw (a) -- (b);
		\draw (b) -- (c);
		\node[world, left=0pt of a,xshift=56pt,yshift=10pt, label={[label distance=-2pt]168:\scriptsize{$u4e\colon$}}] (u4) {};
	%
	 	\draw[-] (v2) -- (u3) node[below,midway] {{\tiny $b,c$}};
		\draw[-] (v4) -- (u4) node[below,midway] {{\tiny $a,b,c$}};
	  	\draw[-] (v2) -- (v4) node[right,midway] {{\tiny $b,c$}};
	  	\draw[-] (u3) -- (u4) node[right,midway] {{\tiny $b,c$}};
\end{tikzpicture}\tabularnewline
\end{tabular}\vspace{-8pt}
\par\end{centering}
\caption{\textbf{Example 1, pt. 3 (De Dicto Announcement).}\label{fig:event2+static3.de.dicto}
The boss breaks the news from the log to $b$ and $c$ piecemeal.
\textbf{\emph{Left:}} First, $a$ makes a \emph{de dicto }announcement:
$a$ knows that somebody made a mistake. \textbf{\emph{Right:}} The
effect on Fig. \ref{fig:static2}. Only $w1$ does not survive. In
$u3e$, everybody knows \emph{de dicto }that somebody messed up: $\forall xK_{x}\exists yM(y)$.
The boss also knows \emph{de re},\emph{ }i.e., knows \emph{who}: $u3e\vDash_{g}\exists xK_{a}M(x)$,
as $u3e\vDash_{g[x\protect\mapsto c]}K_{a}M(x)$. The employees do
not know that $a$ knows \emph{de re}: $u3e\vDash_{g}\forall x(\exists yN(y,x)\rightarrow\hat{K}_{x}\neg\exists zK_{a}M(z))$\textemdash since
$v4e\vDash_{g}M(x)$ iff $g(x)=b$, but then $u4e\protect\not\vDash_{g}M(x)$.
I.e., there is no \emph{one }object to serve as valuation for $x$
such that $v4e$ and $u4e$ satisfy $M(x)$ simultaneously). The employees
are held in suspense!}

\medskip

\begin{centering}
\vspace{8pt}%
\begin{tabular}{>{\centering}m{0.33\textwidth}|>{\centering}m{0.63\textwidth}}
\begin{tikzpicture}[->,>=stealth',auto,shorten >=0pt,shorten <=0pt]
	\node[realevent, label={[label distance=-2pt]168:\scriptsize{$\sigma\colon$}}] (e1) {{\small$\exists x K_{\underline{a}} M(x)$\\ $\top$}};
	\end{tikzpicture} & \begin{tikzpicture}[->,>=stealth',auto,shorten >=0pt,shorten <=0pt]
		\node[agent] (a1) {{\small$a$}};
		\node[agentpred, right=8pt of a1, yshift=14pt] (b) {{\small$b$}};
		\node[agent, right=8pt of b, yshift=-14pt] (c) {{\small$c$}};
		\draw (a1) -- (b);
		\draw (b) -- (c);
		\node[world, left=0pt of a1,xshift=56pt,yshift=10pt, label={[label distance=-2pt]168:\scriptsize{$v2e\sigma\colon$}}] (w2) {};

		\node[agent, right=80pt of a1] (a) {{\small$a$}};
		\node[agent, right=8pt of a, yshift=14pt] (b) {{\small$b$}};
		\node[agentpred, right=8pt of b, yshift=-14pt] (c) {{\small$c$}};
		\draw (a) -- (b);
		\draw (b) -- (c);
		\node[real, left=0pt of a,xshift=56pt,yshift=10pt, label={[label distance=-2pt]168:\scriptsize{$u3e\sigma\colon$}}] (w3) {};
	 	\draw[-] (w2) -- (w3) node[below,midway] {{\tiny $b,c$}};
\end{tikzpicture}\tabularnewline
\end{tabular}\vspace{-8pt}
\par\end{centering}
\caption{\textbf{Example 1, pt. 4 (De Re Announcement).}\label{fig:event3+static4.de.re}
Following a dramatic pause, the boss reveals a stronger piece of information:
the boss knows who messed up. This \emph{de re }announcement is on
the left, with $Q(e,e)=(x^{\star}=x^{\star})$; its result on Fig.
\ref{fig:event2+static3.de.dicto} (Right) on the right. In $u3e\sigma$,
everybody knows that $a$ has \emph{de re} knowledge: $\forall xK_{x}\exists yK_{a}M(y)$,
but $b$ and $c$ still only have \emph{de dicto} knowledge: $\forall x((x=b\vee x=c)\rightarrow K_{x}\exists yM(y)\wedge\neg\exists zK_{x}M(z))$.}
\end{figure}

\subsection{Postconditions and Network Change}\vspace{-6pt}
Action models with postconditions allows $\mathsf{DTML}$ to represent changes
to the social network. Such changes may be combined with the general
functionality of action models such that some agents may know what
changes occur while others remain in the dark. Fig. \ref{fig:event4+static5}
provides a simple example, including the details calculating the updated
network. Fig. \ref{ex2: pt2-1} presents an example of how \emph{de re/de dicto
} knowledge affects what is learned by a publicly observed network change.%
\vspace{-18pt}
\begin{figure}[H]
\begin{centering}
\begin{tabular}{>{\centering}m{0.48\textwidth}|>{\centering}m{0.48\textwidth}}
\begin{tikzpicture}[->,>=stealth',auto,shorten >=0pt,shorten <=0pt]
    \node[realevent, text width=142pt, label={167:\scriptsize{$\dagger\colon$}}]
        (e1) {{\small$\top$\\ $N(a,b),N(b,c)\mapsto\bot, N(a,c)\mapsto\top$}}; 	
\end{tikzpicture} & \begin{tikzpicture}[->,>=stealth',auto,shorten >=0pt,shorten <=0pt]
		\node[agent] (a1) {{\small$a$}};
		\node[agentpred, right=8pt of a1, yshift=14pt] (b) {{\small$b$}};
		\node[agent, right=8pt of b, yshift=-14pt] (c) {{\small$c$}};
		\draw (a1.north east) -- (c.north west);
		\node[world, left=0pt of a1,xshift=56pt,yshift=10pt, label={[label distance=-2pt]100:\scriptsize{$v2e\sigma\dagger\colon$}}] (w2) {};

		\node[agent, right=80pt of a1] (a) {{\small$a$}};
		\node[agent, right=8pt of a, yshift=14pt] (b) {{\small$b$}};
		\node[agentpred, right=8pt of b, yshift=-14pt] (c) {{\small$c$}};
		\draw (a.north east) -- (c.north west);
		\node[real, left=0pt of a,xshift=56pt,yshift=10pt, label={[label distance=-2pt]100:\scriptsize{$u3e\sigma\dagger\colon$}}] (w3) {};
	 	\draw[-] (w2) -- (w3) node[above,midway] {{\tiny $b,c$}};
\end{tikzpicture}\tabularnewline
\end{tabular}
\par\end{centering}
\caption{\textbf{Example 1, pt. 5 (Getting Fired).}\label{fig:event4+static5}
The employees are dying to know who messed up the server. But the
boss just proclaims: `$b$, you are fired! $c$, you are promoted!'
\textbf{\emph{Left: }}Action with three instructions for factual change:
$\mathsf{post}(\dagger)(N(a,b))=\bot$, $\mathsf{post}(\dagger)(N(b,c))=\bot$
and $\mathsf{post}(\dagger)(N(a,c))=\top$ (illustrated by $\protect\mapsto$).
Else $\mathsf{post}=id$. As $u3e\sigma\protect\not\vDash\bot$, the
first two instructions entail that $(a,b),(b,c)\in N^{-}(u3e\sigma)$,
while the latter implies that $(a,c)\in N^{+}(u3e\sigma)$. \textbf{\emph{Right:}}
The network is updated to $I'(N,u3e\sigma\dagger)=(I(N,u3e\sigma)\cup N^{+}(u3e\sigma))\backslash N^{-}(u3e\sigma)=(\{(a,b),(b,c)\}\cup\{(a,c)\})\backslash\{(a,b),(b,c)\}=\{(a,b)\}$.
In $u3e\sigma\dagger$, neither $b$ nor $c$ know who made the mistake.
Unrepresented, $a$ thinks that only bad superiors let their employees
make mistakes.}
\end{figure}
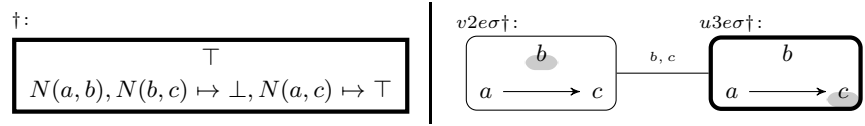
\vspace{-42pt}
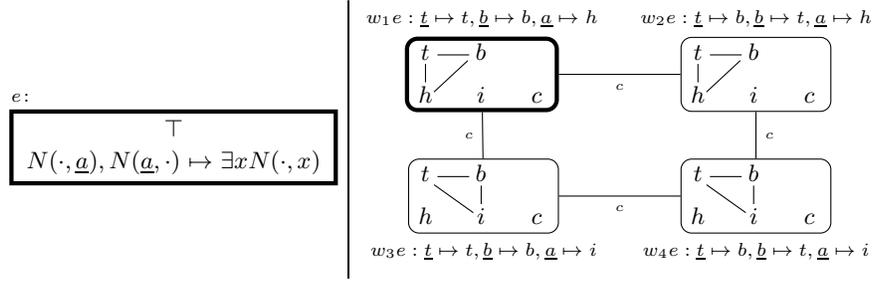
\begin{figure}[H]
\centering{}%
\begin{tabular}{>{\centering}m{0.38\textwidth}|>{\centering}m{0.58\textwidth}}
\begin{tikzpicture}[label distance=-2pt]]
	\node[realevent, text width = 116pt, label={164:\scriptsize{$e\colon$}}] (e1) {{\small$\top$\\$N(\cdot,\underline{a}), N(\underline{a},\cdot) \mapsto \exists x N(\cdot,x)$}};
\end{tikzpicture} & \begin{tikzpicture}
	\node[agent] (t0) {{\small$t$}};
	\node[agent, right=8pt of t0] (b) {{\small$b$}};
	\node[agent, below=10pt of t0] (a) {{\small$h$}};
	\node[agent, right=8pt of a] (f) {{\small$i$}};
	\node[agent, right=8pt of f] (c) {{\small$c$}};
	\draw[shorten <=-2pt,shorten >=5pt] (t0) -- (a);
	\draw[shorten <=-2pt,shorten >=-2pt] (t0.20) -- (b.160); 
	\draw[shorten >=2pt,shorten <=-2pt] (b.190) -- (a.60);
	\node[real, left=0pt of t0,xshift=57pt,yshift=-5pt, label={[label distance=0pt]90:\scriptsize{$w_1e: \underline{t}\mapsto t, \underline{b} \mapsto b, \underline{a} \mapsto h$}}] (w1) {};

	\node[agent, right= 90pt of t0] (t1) {{\small$t$}};
	\node[agent, right=8pt of t1] (b) {{\small$b$}};
	\node[agent, below=10pt of t1] (a) {{\small$h$}};
	\node[agent, right=8pt of a] (f) {{\small$i$}};
	\node[agent, right=8pt of f] (c) {{\small$c$}};
	\draw[shorten <=-2pt,shorten >=5pt] (t1) -- (a);
	\draw[shorten <=-2pt,shorten >=-2pt] (t1.20) -- (b.160); 
	\draw[shorten >=2pt,shorten <=-2pt] (b.190) -- (a.60);
	\node[world, left=0pt of t1, xshift=57pt,yshift=-5pt, label={[label distance=0pt]90:\scriptsize{$w_2e: \underline{t}\mapsto b, \underline{b} \mapsto t, \underline{a} \mapsto h$}}] (w2) {};
	
	\node[agent, below= 40pt of t0] (t2) {{\small$t$}};
	\node[agent, right=8pt of t2] (b) {{\small$b$}};
	\node[agent, below=10pt of t2] (a) {{\small$h$}};
	\node[agent, right=8pt of a] (f) {{\small$i$}};
	\node[agent, right=8pt of f] (c) {{\small$c$}};
	\draw[shorten <=-2pt,shorten >=5pt] (b) -- (f);
	\draw[shorten <=-2pt,shorten >=-2pt] (t2.20) -- (b.160); 
	\draw[shorten >=3pt,shorten <=-4pt] (t2.-25) -- (f.100);
	\node[world, left=0pt of t2,xshift=57pt,yshift=-5pt, label={[label distance=-42pt]90:\scriptsize{$w_3e: \underline{t}\mapsto t, \underline{b} \mapsto b, \underline{a} \mapsto i$}}] (w3) {};

	\node[agent, below= 40pt of t1] (t3) {{\small$t$}};
	\node[agent, right=8pt of t3] (b) {{\small$b$}};
	\node[agent, below=10pt of t3] (a) {{\small$h$}};
	\node[agent, right=8pt of a] (f) {{\small$i$}};
	\node[agent, right=8pt of f] (c) {{\small$c$}};
	\draw[shorten <=-2pt,shorten >=5pt] (b) -- (f);
	\draw[shorten <=-2pt,shorten >=-2pt] (t3.20) -- (b.160); 
	\draw[shorten >=3pt,shorten <=-4pt] (t3.-25) -- (f.100);
	\node[world, left=0pt of t3,xshift=57pt,yshift=-5pt, label={[label distance=-42pt]90:\scriptsize{$w_4e: \underline{t}\mapsto b, \underline{b} \mapsto t, \underline{a} \mapsto i$}}] (w4) {};	
	
	 \draw[-] (w1) -- (w2) node[below,midway] {{\tiny $c$}};
	 \draw[-] (w2) -- (w4) node[right,midway] {{\tiny $c$}};
	 \draw[-] (w1) -- (w3) node[left,midway] {{\tiny $c$}};
	 \draw[-] (w3) -- (w4) node[below,midway] {{\tiny $c$}};		
\end{tikzpicture}\tabularnewline
\end{tabular}\caption{\textbf{Example 2, pt.2 (Becoming Criminal)}
\textbf{\emph{Left:}} The thieves convince The Asset to cooperate with
them, in exchange for stolen goods. For simplicity, assume that the action of $\underline{a}$ joining the thief network is noticed by everyone. We model
this with the action model, with $\mathsf{post}(e)(N(\cdot,\underline{a}))=\exists xN(\cdot,x)$
and $\mathsf{post}(e)(N(\underline{a},\cdot))=\exists xN(x,\cdot)$
for $\cdot\in\{\underline{t},\underline{b},\underline{a},\underline{h},\underline{i},\underline{c}\}$.
Informally, these say: ``If you are a member of the network, then
$\underline{a}$ becomes your neighbor''.
\textbf{\emph{
Right:}}\textbf{ }The effect of event $e$ on Fig. \ref{ex2: pt1}: The
network has changed in all worlds, but differently. E.g., in $w_{1}$,
we had $\neg N(\underline{b},\underline{a})$; in $(w_{1},e)$, we have
$N(\underline{b},\underline{a})$ as $(b,h)\in N^{+}((w_{1},e))$
since $w_{1}\vDash_{g}\mathsf{post}(e)(N(\underline{b},\underline{a}))$\textemdash i.e.,
$\exists xN(\underline{b},x)$. Now all thieves and hostages know
the new network, as they know whom $\underline{a}$ refers to. E.g.: Tokyo knows all her neighbors,
$(w_{1},e)\vDash_{g}\forall x(N(\underline{t},x)\to K_{\underline{t}}N(\underline{t},x))$. The cop only learns that \textit{some}
hostage has joined the network, but can't tell whom: $(w_{1},e)\vDash_{g}K_{\underline{c}}\exists x(x\protect\not\protect\doteq\underline{t}\wedge x\protect\not\protect\doteq\underline{b}\wedge N(\underline{t},x))$
but $(w_{1},e)\protect\not\vDash_{g}\exists xK_{\underline{c}}(x\protect\not\protect\doteq\underline{t}\wedge x\protect\not\protect\doteq\underline{b}\wedge N(\underline{t},x))$.}
\label{ex2: pt2-1}\label{ex2: pt3-1}
\end{figure}
\vspace{-24pt}
\subsection{Learning Who}\vspace{-4pt}

Allowing for the possibility of non-rigid names has the consequence
that public announcements of atomic propositions may differ in informational
content depending on the epistemic state of the listener. This can
be exploited by the thieves of Example 2 to enforce a form of \textit{privacy}\textit{\emph{\textemdash as
code names should}}. The notion of privacy involved is orthogonal
to the notion of privacy modeled in DEL using private announcements.
Though the message is public in the standard sense of everyone being
aware of it and its content, as it involves non-rigid names,
its epistemic effects are not the same for all agents. This is in
contrast with standard public announcements, which yield the same
information to everyone. \vspace{-14pt}

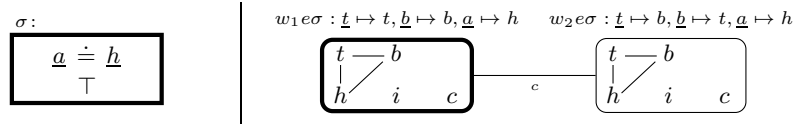
\begin{figure}[H]
\centering{}%
\begin{tabular}{>{\centering}m{0.33\textwidth}|>{\centering}m{0.63\textwidth}}
\begin{tikzpicture}[->,>=stealth',auto,shorten >=0pt,shorten <=0pt,label distance=-2pt]
	\node[realevent, label={136:\scriptsize{$\sigma\colon$}}] (e1) {{\small$\underline{a} \doteq \underline{h}$\\$\top$}};
\end{tikzpicture} & \centering{}\begin{tikzpicture}
	\node[agent] (t0) {{\small$t$}};
	\node[agent, right=8pt of t0] (b) {{\small$b$}};
	\node[agent, below=10pt of t0] (a) {{\small$h$}};
	\node[agent, right=8pt of a] (f) {{\small$i$}};
	\node[agent, right=8pt of f] (c) {{\small$c$}};
	\draw[shorten <=-2pt,shorten >=5pt] (t0) -- (a);
	\draw[shorten <=-2pt,shorten >=-2pt] (t0.20) -- (b.160); 
	\draw[shorten >=2pt,shorten <=-2pt] (b.190) -- (a.60);
	\node[real, left=0pt of t0,xshift=57pt,yshift=-5pt, label={[label distance=0pt]90:\scriptsize{$w_1e\sigma: \underline{t}\mapsto t, \underline{b} \mapsto b, \underline{a} \mapsto h$}}] (w1) {};

	\node[agent, right= 90pt of t0] (t1) {{\small$t$}};
	\node[agent, right=8pt of t1] (b) {{\small$b$}};
	\node[agent, below=10pt of t1] (a) {{\small$h$}};
	\node[agent, right=8pt of a] (f) {{\small$i$}};
	\node[agent, right=8pt of f] (c) {{\small$c$}};
	\draw[shorten <=-2pt,shorten >=5pt] (t1) -- (a);
	\draw[shorten <=-2pt,shorten >=-2pt] (t1.20) -- (b.160); 
	\draw[shorten >=2pt,shorten <=-2pt] (b.190) -- (a.60);
	\node[world, left=0pt of t1, xshift=57pt,yshift=-5pt, label={[label distance=0pt]90:\scriptsize{$w_2e\sigma: \underline{t}\mapsto b, \underline{b} \mapsto t, \underline{a} \mapsto h$}}] (w2) {};
	
	 \draw[-] (w1) -- (w2) node[below,midway] {{\tiny $c$}};
\end{tikzpicture}%
\tabularnewline
\end{tabular}\caption{\textbf{Example 2, pt.4 (Revealing the Asset)}\label{ex2: pt4} In
the model in Fig. \ref{ex2: pt3-1} (Right), even a \textit{public
announcement} of $N(\underline{t},\underline{a})$ would not inform
the cop about who joined the network. To know who joined the network,
the cop must learn \textit{who The Asset is}. As the cop knows who
$\underline{h}$ is, learning that $\underline{h}$ is The Asset suffices.
\textbf{\emph{Left: }}The event model $\sigma$ for the public announcement that $\underline{a}\protect\doteq\underline{h}$,
revealing the identity of The Asset. \textbf{\emph{Right: }}\label{ex2: pt5}
The product update of Fig. \ref{ex2: pt3-1} (Right) and event $\sigma$. The cop now knows the structure of the network, as a result of the removal
of $w_{3}e$ and $w_{4}e$.}
\end{figure}
\section{Embedding Dynamic Social Network Logics in \textsf{DTML}
\label{sec:hard.results}}

This section examines relations between the hybrid network models
and their languages to $\mathsf{DTML}$. As hybrid languages corresponds
to fragments of first-order logic with equality ($\mathsf{FOL}_{=}$),
which term-modal logic extends, it stands to reason that the hybrid
languages and models mentioned in Sec. 2 may be embedded in term-modal
logic. A precise statement and a proof sketch follows below. Turning
to dynamics, things are more complicated. \cite{Seligman2013} presents
a very flexible hybrid framework expressing network dynamics using
\emph{General Dynamic Dynamic Logic} ($\mathsf{GDDL}$, \cite{girard2012general}).
We leave general characterizations of equi-expressive fragments of
$\mathsf{GDDL}$ and $\mathsf{DTML}$ as open question, but remark
that all $\mathsf{GDDL}$ action-examples of \cite{Seligman2013}
may be emulated using $\mathsf{DTML}$ action models, and in many
cases via fairly simple ones. More thoroughly, we show that the logic
of \emph{Knowledge, Diffusion and Learning} ($\mathsf{KDL}$, \cite{ChristoffHansenProietti2016})
has a complete and decidable system, a question left open in \cite{ChristoffHansenProietti2016}.
This is shown by encoding $\mathsf{KDL}$ in $\mathsf{DTML}$.

\subsection{Embedding Static Languages and Models}

The static hybrid languages of  \cite{Liu2014,Seligman2013,Seligman2011,Christoff2016,ChristoffHansen2013,ChristoffHansen2015,ChristoffHansenProietti2016} are all sub-languages of
$\mathcal{L}(P,Nom)$, defined and translated into $\mathsf{DTML}$
below.  \cite{Liang2011} also includes state nominals, which our results
do not cover. $\mathcal{L}(P,Nom)$ is read indexically, as described in Sec. 2.

\begin{definition}
With $p\in P$ and $x\in Nom$, the \textbf{language} $\mathcal{L}(P,Nom)$
is given by 
\[
\varphi\coloneqq p\mid\neg\varphi\mid\varphi\wedge\varphi\mid@_{x}\varphi\mid K\varphi\mid N\varphi\mid U\varphi
\]
Denote the fragments without $U$ and $@_{x}$ by $\mathcal{L}_{-U}(P,Nom)$
and $\mathcal{L}_{-@}(P,Nom)$.
\end{definition}

Hybrid logics may be translated into $\mathsf{FOL}_{=}$; our translation resembles that of \cite{brauner2011hybrid}. We identify agent nominals
with first-order variables, translate the modal operator $N$ to the
relation symbol $N(\cdot,\cdot)$, and relativize the interpretation
of the indexical $K$ to the nominal/variable $x$ by using the term-indexed
operator $K_{x}$. Formally, the translation is defined as follows.
\begin{definition}
\label{def: translat} Let $\Sigma_{n}(P,Nom)=(\mathtt{V},\mathtt{C},\mathtt{P},N,\dot{=})$
be the signature with $\mathtt{V}=Nom$,
$\mathtt{C}=\{\underline{a}_{1},\dots,\underline{a}_{n}\}$ and $\mathtt{P}=P$.
Translations $T_{x},T_{y}$ both mapping $\mathcal{L}(P,Nom)$ to
\textup{$\mathcal{L}(\Sigma_{n}(P,Nom))$} are defined by mutual recursion. It is assumed that two nominals $x$ and $y$ are given which do not occur in the formulas to be translated. For $p\in P$ and $i\in Nom$, define $T_{x}$ by:
\begin{align*}
T_{x}(p) & =p(x) & T_{x}(@_{i}\varphi) & =T_{x}(\varphi)(x\mapsto i)\\
T_{x}(i) & =x\dot{=}i & T_{x}(N\varphi) & =\forall y(N(x,y)\rightarrow T_{y}(\varphi))\\
T_{x}(\varphi\wedge\psi) & =T_{x}(\varphi)\wedge T_{x}(\psi) & T_{x}(K\varphi) & =K_{x}T_{x}(\varphi)\\
T_{x}(\neg\varphi) & =\neg T_{x}(\varphi) & T_{x}(U\varphi) & =\forall xT_{x}(\varphi)
\end{align*}
The translation $T_{y}$ is obtained by exchanging $x$ and $y$ in
$T_{x}$.
\end{definition}

To show the translation truth-preserving, we embed the class of hybrid
network models into a class of term-modal models:
\begin{definition}
\label{def: mathsf.T.(M)} Let $M=(A,W,(N_{w})_{w\in W},\sim,g,V)$ be a
hybrid network model for $\mathcal{L}(P,Nom)$. Then the \textbf{$\mathsf{TML}$
image} of $M$ is the $\mathcal{L}(\Sigma_{n}(P,Nom))$ $\mathsf{TML}$
model $\mathsf{T}(M)=(A,W,\sim,I)$ sharing $A,W$ and $\sim$ with
$M$ and with $I$ given by
\begin{enumerate}
\item $\forall\underline{c}\in\mathtt{C},\forall w,v\in W,\forall a,b\in A,(I(\underline{c},w)=a\text{ and }w\sim_{b}v\Rightarrow I(\underline{c},v)=a)$
\item $I(p,w)=\{a\colon(w,a)\in V(p)\}$
\item $I(N,w)=\{(a,b)\in A\times A\colon(a,b)\in N_{w}\}$
\end{enumerate}
\end{definition}

\noindent The model $\mathsf{T}(M)$ has the same agents, worlds and
epistemic relations as $M$. The interpretation 1. encodes \textbf{\emph{weak
rigidity}}: if $(w,v)\in\bigcup_{a\in A}\sim_{a}$, then any constant
denotes the same in $w$ and $v$, emulating the rigid names of hybrid
network models; 2. ensures predicates are true of the same agents at
the same worlds, and 3. ensures the same agents are networked in the same worlds.

With the translations $T_{x},T_{y}$ and the embedding $\mathsf{T}$,
it may be shown that $\mathsf{DTML}$ can fully code the static semantics
of $\mathcal{L}(P,Nom)$ hybrid network logics:
\begin{proposition}
\label{prop: truth-preserving-static} Let $M=(A,W,(N_{w})_{w\in W},\sim,g,V)$
be a hybrid network model. Then for all $\varphi\in\mathcal{L}(P,Nom)$,
$M,w,g(\bullet)\models\varphi\text{ iff }\mathsf{T}(M),w\models_{g}T_{\bullet}(\varphi)$,
$\bullet=x,y$.
\end{proposition}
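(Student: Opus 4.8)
The plan is to prove the biconditional by induction on the structure of $\varphi$, carried out \emph{simultaneously} for the two indices $\bullet=x$ and $\bullet=y$ and uniformly in the world $w$ and the valuation $g$. The simultaneous treatment is forced by the clause $T_{x}(N\varphi)=\forall y(N(x,y)\rightarrow T_{y}(\varphi))$, which reduces a statement about $T_{x}$ to one about $T_{y}$ on a subformula; an induction on $T_{x}$ in isolation would not close. Since $T_{y}$ is obtained from $T_{x}$ by interchanging $x$ and $y$, and neither $M$ nor $\mathsf{T}(M)$ privileges either variable, the two halves of each inductive step are mirror images, so I would display only the computations for $T_{x}$ and invoke this symmetry for $T_{y}$. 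I would also record at the outset that a variable valuation assigns world-independent values, so the only datum transported between worlds under the modalities is the agent $g(x)$ indexing evaluation, which plays exactly the role of the ``current agent'' parameter $g(\bullet)$ in the hybrid semantics. (The constant-rigidity condition~1 of Definition~\ref{def: mathsf.T.(M)} is not needed here, as the formulas in the range of $T_{x},T_{y}$ contain no constants and variables are rigid by definition.)

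For the base cases I would simply unfold both semantics. For $p\in P$, $\mathsf{T}(M),w\models_{g}p(x)$ holds iff $g(x)\in I(p,w)$, which by clause~2 of the embedding is $(w,g(x))\in V(p)$, i.e.\ the hybrid condition $M,w,g(x)\models p$. For a nominal $i$, $\mathsf{T}(M),w\models_{g}(x\doteq i)$ holds iff $g(x)=g(i)$, which is exactly the hybrid condition for $M,w,g(x)\models i$, namely that the current agent is the one named by $i$. The Boolean cases $\neg$ and $\wedge$ are immediate from the induction hypothesis.

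The three modal cases rely on the structure-sharing of $\mathsf{T}(M)$ with $M$. For $K$: as $\mathsf{T}(M)$ and $M$ share $\sim$, the relation $\sim_{g(x)}$ interpreting $K_{x}$ coincides with the one interpreting the hybrid $K$ at the current agent, so applying the induction hypothesis at the same index $x$ and valuation $g$ to each $w'$ with $w\sim_{g(x)}w'$ closes the case. For $N$: unfolding $\forall y(N(x,y)\rightarrow T_{y}(\varphi))$ quantifies over all $b\in A$ with $(g(x),b)\in I(N,w)$, which by clause~3 is precisely $N_{w}(g(x),b)$; applying the induction hypothesis with the \emph{shifted} valuation $g[y\mapsto b]$ at the \emph{other} index $\bullet=y$ gives $M,w,b\models\varphi$, matching the hybrid clause for $N\varphi$. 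For $U$: unfolding $\forall x$ quantifies over all $a\in A$, and the induction hypothesis at valuation $g[x\mapsto a]$ recovers ``$M,w,a\models\varphi$ for every $a$'', which is the hybrid reading of $U\varphi$.

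The one case requiring an auxiliary result is $@_{i}$, where $T_{x}(@_{i}\varphi)=T_{x}(\varphi)(x\mapsto i)$. Here I would first prove a standard substitution lemma for the static language: for every $\psi\in\mathcal{L}(\Sigma_{n}(P,Nom))$ and every nominal $i$ free for $x$ in $\psi$, $\mathsf{T}(M),w\models_{g}\psi(x\mapsto i)$ iff $\mathsf{T}(M),w\models_{g[x\mapsto g(i)]}\psi$. Instantiating this at $\psi=T_{x}(\varphi)$ and then applying the induction hypothesis at index $x$ and valuation $g[x\mapsto g(i)]$ yields $M,w,g(i)\models\varphi$, which is the meaning of $@_{i}\varphi$. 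I expect the main obstacle to be exactly the capture-avoidance bookkeeping behind this lemma: one must use the standing assumption that $x$ and $y$ do not occur in the formulas being translated, so that the genuine nominal $i\neq x,y$ of $\varphi$ is never captured by the binders $\forall y$ (introduced by $N$) or $\forall x$ (introduced by $U$) that appear in $T_{x}(\varphi)$, and so that the substitution affects only free occurrences of $x$. Once these freshness conventions are fixed, the substitution lemma is routine, and with it the $@_{i}$ case and hence the whole induction go through.
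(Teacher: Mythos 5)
Your proof is correct and takes essentially the approach the paper intends: the paper states Proposition~\ref{prop: truth-preserving-static} without proof and, in the proof of Proposition~\ref{prop: truth-preserving-dynamic}, simply invokes the induction hypothesis for the static cases, presupposing exactly the simultaneous induction on $\varphi$ over both indices $\bullet=x,y$ that you carry out. Your explicit handling of the mutual recursion forced by $T_{x}(N\varphi)=\forall y(N(x,y)\rightarrow T_{y}(\varphi))$, the uniformity in $w$ and $g$, and the substitution/freshness bookkeeping for the $@_{i}$ case supply details the paper leaves implicit.
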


\subsection{\textsf{KDL} Dynamic Transformations and Learning Updates in \textsf{DTML}}

We show that $\mathsf{KDL}$ \cite{ChristoffHansenProietti2016} dynamics \textit{\emph{may be embedded
in}}\emph{ }$\mathsf{DTML}$, for finite agent sets (as assumed in
\cite{ChristoffHansenProietti2016}). Given Prop. \ref{prop: truth-preserving-static}%
, we argue that each $\mathsf{KDL}$ model transformer is representable
by a $\mathsf{DTML}$ action model and that the dynamic $\mathsf{KDL}$
language is truth-preservingly translatable into a $\mathsf{DTML}$
sublanguage. The logic of the class of $\mathsf{KDL}$ models is,
up to language translations, the logic of its corresponding class
of $\mathsf{DTML}$ models. We show that \textit{the logic of this
class of $\mathsf{DTML}$ models can be completely axiomatized, and
the resulting system is decidable}. Thus, by embedding $\mathsf{KDL}$
in $\mathsf{DTML}$, we find a complete system for the former.

In $\mathsf{KDL}$\footnote{Notation here is equivalent but different to fit better with the rest
of this paper.}, agents are described by \emph{feature propositions }reading ``for
feature $\mathsf{f}$, I have value $\mathsf{z}$''. With $\mathsf{F}$
a countable set of features and $\mathsf{Z_{f}}$ a finite set of
possible values of $\mathsf{f}\in\mathsf{F}$, the set of feature
propositions is $\mathsf{FP}=\{(\mathsf{f}\doteqdot\mathsf{z})\colon\mathsf{f}\in\mathsf{F},\mathsf{z}\in\mathsf{Z_{f}}\}$.\linebreak{}
The static language of \cite{ChristoffHansenProietti2016} is then
$\mathcal{L}_{-U}(\mathsf{FP},Nom)$. The dynamic language $\mathcal{L}_{\mathsf{KDL}}$
extends $\mathcal{L}_{-U}(\mathsf{FP},Nom)$ with dynamic modalities
$[d]$ and $[\ell]$ for \emph{dynamic transformations }$d$ and \emph{learning
updates} $\ell$:
\[
\varphi\Coloneqq(\mathsf{f}\doteqdot\mathsf{z})\mid i\mid\neg\varphi\mid\varphi\wedge\varphi\mid@_{i}\varphi\mid N\varphi\mid K\varphi\mid[d]\varphi\mid[\ell]\varphi
\]
A \textbf{dynamic transformation} $d$ changes feature values of agents:
each is a pair $d=(\Phi,post)$ where $\Phi\subseteq\mathcal{L}_{\mathsf{KDL}}$
is a non-empty finite set of pairwise inconsistent formulas and $post:\Phi\times\mathsf{F}\to(Z_{n}\cup\{\star\})$
is a $\mathsf{KDL}$ post-condition. Encoded by $post(\varphi,\mathsf{f})=x$
is the instruction: if $(w,a)\vDash\varphi$, then after $d$, set
$\mathsf{f}$ to value $x$ at $(w,a)$, if $x\in Z_{n}$; if $x=\star$,
$\mathsf{f}$ is unchanged. A \textbf{learning update} cuts accessibility
relations: the update with finite $\ell\subseteq\mathcal{L}_{\mathsf{KDL}}$
keeps a $\sim_{a}$ link between worlds $w$ and $v$ iff, for all
$\varphi\in\ell$, $(w,b)\vDash\varphi\Leftrightarrow(v,b)\vDash\varphi$
for all neighbors $b$ of $a$. 
\begin{definition}
Given a $\mathsf{KDL}$ model $M = (A,W, (N_w)_{w\in W}, \sim, g, V)$, the model reached after applying $d$ is $M^d = (A^d,W^d, (N^d_w)_{w\in W}, \sim^d, g^d, V^d)$ where only $V^d$ is different, and is defined as follows: $(w,a)\in V^d(\mathsf{f}\doteqdot\mathsf{z})$ iff (a)  $post(\varphi,\mathsf{f})=x$ for some $\varphi\in \Phi$ such that $M,w,a\models \varphi$, where $x \neq\star$; or (b) condition (a) does not hold and $(w,a)\in V(\mathsf{f}\doteqdot\mathsf{z})$. 
\end{definition}
\begin{definition} A learning update is a finite set of formulas $\ell \subseteq \mathcal{L}_{\mathsf{KDL}}$. Given a $\mathsf{KDL}$ model $M = (A,W, (N_w)_{w\in W}, (\sim_a)_{a\in A}, g, V)$, the model after $\ell$ is $M^\ell = (A,W, (N_w)_{w\in W}, (\sim'_a)_{a\in A}, g, V)$ where: 
\[w \sim'_a v \text{ iff }  w \sim_a v \text{ and } \forall b\in A (N_w(a,b) \Rightarrow \forall \varphi\in \ell ( M,w,b\models \varphi \text{ iff } M,v,b\models \varphi))\]
\end{definition}
Let $\mathsf{D}$ and $\mathsf{L}$ be the sets of dynamic transformations and learning updates. The result
of applying $\dagger\in\mathsf{D}\cup\mathsf{L}$ to $M$ is denoted
$M^{\dagger}$, and the $[\dagger]$ modality has semantics $M,w,a\models[\dagger]\varphi\text{ iff }M^{\dagger}w,a\models\varphi$. 

As we show below, for every $\dagger\in\mathsf{D}\cup\mathsf{L}$, there is a pointed $\mathsf{DTML}$ action model $\Delta^{\dagger}$ with
identical effects. As $\mathsf{KDL}$ operations may involve formulas
with $[\dagger]$-modalities, we must use $\mathsf{DTML}$ action
models that allow $[\Delta,e]$-modalities in their conditions, and
translate $\mathcal{L}_{\mathsf{KDL}}$ into the general $\mathsf{DTML}$
language that results, denoted $\mathcal{L}(\Sigma_{n}(\mathsf{FP},Nom)+[\Delta])$.\footnote{Defined using double recursion as standard; see \cite{ALR} for details.}
This language is interpreted over $\mathsf{DTML}$ models with standard action
model semantics:
\[
(M,w)\vDash_{g}[\Delta,e]\varphi\text{ iff }M\otimes\Delta,(w,e)\vDash\varphi.
\]

We define now the action models $\Delta^\dagger$. For a dynamic transformation $d\in\mathsf{D}$, \cite{ChristoffHansen2015} provide reduction axioms showing $d$'s instructions statically encodable
in $\mathcal{L}_{\mathsf{KDL}})$. The reduction axiom for atoms is as follows: 
\[[d]\mathsf{f}\doteqdot\mathsf{z} \leftrightarrow \left(\bigvee_{\varphi\in\Phi:post(\varphi,\mathsf{f}) = \mathsf{z},  \mathsf{z}\in \mathsf{Z_{f}}} \varphi\right) \vee \left(\neg \left( \bigvee_{\varphi\in\Phi: post(\varphi,\mathsf{f}) = \mathsf{z},  \mathsf{z}\in \mathsf{Z_{f}}} \varphi\right) \wedge \mathsf{f}\doteqdot\mathsf{z}\right)\]

As $d$ changes atomic truth values under a definable instruction, its effects may be simulated by an action model with a matching post-condition (i.e., the translation of the definable instruction). More specifically, the action model $\Delta^d$ is defined as follows.
\begin{definition} For dynamic transformation $d = (\Phi, post)$, the action model $\Delta^d= (E,Q,\mathsf{pre},\mathsf{post})$ is defined by $E = \{e^d\}$,  $Q(e^d,e^d) = \mathsf{pre}(e^d) = \top$ and for each constant $\underline{a}$, $\mathsf{post}(e)(T_x(\mathsf{f}\doteqdot\mathsf{z})(x \mapsto \underline{a})) =$ 

\[T_x\left(\left(\bigvee_{\varphi\in\Phi:post(\varphi,\mathsf{f}) = \mathsf{z},  \mathsf{z}\in \mathsf{Z_{f}}} \varphi\right) \vee \left(\neg \left( \bigvee_{\varphi\in\Phi: post(\varphi,\mathsf{f}) = \mathsf{z},  \mathsf{z}\in \mathsf{Z_{f}}} \varphi\right) \wedge \mathsf{f}\doteqdot\mathsf{z}\right)\right)(x \mapsto \underline{a})\]
\end{definition}

For a learning update $\ell\in\mathsf{L}$, $\Delta^{\ell}$
has events $e^{X},e^{Y}$ for any consistent subsets $X,Y$ of $\{\varphi(\underline{c}),\neg\varphi(\underline{c})\colon\varphi\in\ell,\underline{c}\in\mathtt{C}\}$ with edge-condition $Q(e^{X},e^{Y})$ satisfied for agents for whom all neighbors agree on $X$ and $Y$. Unsatisfied edge-conditions thereby capture the link cutting mechanism of $\ell.$ The detailed definition of $\Delta^\ell$ is as follows.

\begin{definition} Let $\ell=\{\varphi_1,\dots,\varphi_m\}$ be a learning update. Let $T_x(\ell) \coloneqq \{ T_x(\varphi_i) \mid i = 1,\dots,n\}$ and let $G_\ell \coloneqq \{ T_x(\varphi)(x\mapsto \underline{a}) \mid T_x(\varphi)\in T_x(\ell),\underline{a}\in\mathtt{C}\}$ be the grounding of $T_x(\ell)$ obtained by replacing each free occurrence of $x$ in $T_x(\varphi)$ for each possible constant $\underline{a}\in\mathtt{C}$. Define a $G_\ell$-valuation as a function $val: G_\ell \to \{0,1\}$ and let $\mathcal{V}_\ell$ be the set of all such valuations.\end{definition}

\begin{definition}  Let $\ell$ be a learning update. The corresponding $\mathsf{DTML}$ action model $\Delta^\ell= (E^\ell,Q^\ell,\mathsf{pre}^\ell,\mathsf{post}^\ell)$ is defined by letting 
\begin{itemize}
\item $E^\ell = \{e^{val} \mid val \in \mathcal{V}_\ell \}$,
\item  $\mathsf{pre}^\ell(e^{val}) = \bigwedge \{\varphi \mid val(\varphi)=1\}\cup \{\neg \varphi \mid val(\varphi)=0\}$
\item $Q^\ell(e^{val},e^{val}) = \top$
\item $Q^\ell(e^{val},e^{val'}) = \bigwedge_{\{\underline{a} \in \mathtt{C} \mid \exists \varphi \in\ell \text{ s.t. } val(T_x(\varphi)(x\mapsto\underline{a}))\neq val'(T_x(\varphi)(x\mapsto\underline{a}))  \}} \neg N(x^\star, \underline{a})$, for any two distinct events $e^{val}$, $e^{val'}$
\item $\mathsf{post}^\ell(e)=id$ for all $e\in E^\ell$
\end{itemize}

\end{definition}
Note that the signature $\Sigma_n(FP,ANom)$ is defined to have finitely many constants $\mathtt{C}=\{\underline{a}_{1},\dots,\underline{a}_{n}\}$, and hence both $E$, the preconditions and the edge-conditions in $\Delta^\ell$ are finite, as required. The action model $\Delta^\ell$ works as follows. Each event $e^{val}$ corresponds to one way the agents can be with respect to $G_\ell$, as indicated by $val$. The edge conditions control how links get cut. Two worlds $(w,e^{val})$ and $(v, e^{val'})$ in the updated model will keep a link for the agent named $\underline{a}$, if any disagreement between $val$ and $val'$ does not concern a neighbor of $\underline{a}$. Or, equivalently, if all neighbors of $\underline{a}$ are identical with respect to $G_\ell$. Precisely this condition is encoded in $Q(e^{val},e^{val'}) $.

To formally state that the dynamics of $\dagger\in \mathsf{D}\cup \mathsf{L}$ are simulated by $\Delta^\dagger$, the following clauses are added to translation $T_{\bullet}$, for
$\bullet=x,y$: 
\begin{align*}
T_{\bullet}([d]\varphi) & =[\Delta^{d},e^{d}]T_{\bullet}(\varphi), \\
T_{\bullet}([\ell]\varphi) & =\bigwedge_{e\in E^\ell}(\mathsf{pre}^\ell(e) \to [\Delta^{\ell},e]T_{\bullet}(\varphi))
\end{align*}
where $(\Delta^{\dagger},e^{\dagger})$ is an action model implementations
of \textsf{$\dagger\in\mathsf{D}\cup\mathsf{L}$}. Then $\mathsf{KDL}$
statics and dynamics can be shown performable in $\mathsf{DTML}$:
\begin{proposition}
\label{prop: truth-preserving-dynamic} For any finite agent hybrid
network model $M$ with nominal valuation $g$ and $\varphi\in\mathcal{L}_{\mathsf{KDL}}$:
$M,w,g(\bullet)\models\varphi\text{ iff }\mathsf{T}(M),w\models_{g}T_{\bullet}(\varphi)$,
for $\bullet=x,y$.
\end{proposition}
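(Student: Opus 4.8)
The plan is to prove the biconditional by induction on a well-founded complexity measure on $\mathcal{L}_{\mathsf{KDL}}$-formulas, carried out simultaneously for $\bullet=x$ and $\bullet=y$ (since $T_{x}$ and $T_{y}$ are mutually recursive through the $N$-clause). The atomic, Boolean, $@_{i}$, $K$ and $N$ cases are already handled by the argument for Proposition \ref{prop: truth-preserving-static}, so only the two dynamic cases $\varphi=[d]\psi$ and $\varphi=[\ell]\psi$ are new. In both, the strategy is the same: unfold the $\mathsf{KDL}$ semantics on the left ($M,w,g(\bullet)\models[\dagger]\psi$ iff $M^{\dagger},w,g(\bullet)\models\psi$), unfold the $\mathsf{DTML}$ action-model semantics on the right (via $(M,w)\vDash_{g}[\Delta,e]\chi$ iff $M\otimes\Delta,(w,e)\vDash\chi$), and glue the two together using (i) the induction hypothesis applied to the proper subformula $\psi$ and (ii) a \emph{simulation lemma} asserting that $\mathsf{T}(M^{\dagger})$ coincides with the relevant part of $\mathsf{T}(M)\otimes\Delta^{\dagger}$.

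For a dynamic transformation $d$, I would first establish that $\mathsf{T}(M)\otimes\Delta^{d}$ is, up to renaming each $w$ as $(w,e^{d})$, identical to $\mathsf{T}(M^{d})$. Since $\Delta^{d}$ has the single event $e^{d}$ with $\mathsf{pre}(e^{d})=\top$ and $Q(e^{d},e^{d})=\top$, the worlds, the epistemic relations and the (weakly rigid) constant interpretations carry over unchanged, and the $N$-postcondition is the identity, so only the feature-proposition valuations can differ. For these, the product-update clause places a named agent $a$ in $I'(p,(w,e^{d}))$ exactly when $\mathsf{post}(e^{d})(p(\underline{a}))=T_{x}(\chi)(x\mapsto\underline{a})$ holds at $(\mathsf{T}(M),w)$, where $\chi$ is the \cite{ChristoffHansen2015} reduction-axiom right-hand side and $\underline{a}$ names $a$. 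Invoking the proposition on the component formulas of $\chi$ (the members of $\Phi$) identifies this with $M,w,a\models\chi$, which by the reduction axiom is precisely the $\mathsf{KDL}$ condition for $(w,a)\in V^{d}(p)$. Chaining the induction hypothesis on $\psi$ with this correspondence and the clause $T_{\bullet}([d]\psi)=[\Delta^{d},e^{d}]T_{\bullet}(\psi)$ closes the case.

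For a learning update $\ell$, the simulation lemma needs more care. Because the preconditions $\{\mathsf{pre}^{\ell}(e^{val})\}_{val\in\mathcal{V}_{\ell}}$ are mutually exclusive and jointly exhaustive over worlds, each $w$ survives in $\mathsf{T}(M)\otimes\Delta^{\ell}$ as the single pair $(w,e^{val_{w}})$, where $val_{w}$ records the truth values of the grounded formulas $G_{\ell}$ at $(\mathsf{T}(M),w)$; this yields a bijection $w\mapsto(w,e^{val_{w}})$ onto $W'$. I would then verify that the edge-condition reproduces $\mathsf{KDL}$'s link-cutting exactly: $Q^{\ell}(e^{val_{w}},e^{val_{v}})$ evaluated at $x^{\star}\mapsto a$ in world $w$ holds iff $a$ has no neighbor $\underline{b}$ on which $val_{w}$ and $val_{v}$ disagree, which — translating back through the proposition applied to the members of $\ell$ — is precisely the condition that all neighbors $b$ of $a$ satisfy $M,w,b\models\varphi\Leftrightarrow M,v,b\models\varphi$ for every $\varphi\in\ell$. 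Hence $(w,e^{val_{w}})\sim'_{a}(v,e^{val_{v}})$ in $\mathsf{T}(M)\otimes\Delta^{\ell}$ iff $w$ and $v$ remain $a$-linked in $M^{\ell}$, so the bijection is an isomorphism $\mathsf{T}(M^{\ell})\cong\mathsf{T}(M)\otimes\Delta^{\ell}$. At $(\mathsf{T}(M),w)$ the conjunctive clause $T_{\bullet}([\ell]\psi)=\bigwedge_{e\in E^{\ell}}(\mathsf{pre}^{\ell}(e)\to[\Delta^{\ell},e]T_{\bullet}(\psi))$ then collapses to its single true-precondition conjunct $[\Delta^{\ell},e^{val_{w}}]T_{\bullet}(\psi)$, and chaining with the induction hypothesis on $\psi$ finishes the case.

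I expect two main obstacles. The principal one is that the formulas occurring in $\Phi$ and in $\ell$ may themselves contain $[\dagger]$-modalities, so the appeals to the proposition on these component formulas inside the simulation lemmas are \emph{not} appeals to syntactic subformulas of $[\dagger]\psi$. Naive induction on formula length therefore fails, and the induction must instead run on a well-founded complexity measure dominating the translated dynamic conditions — exactly the kind of measure used to prove soundness of the $\mathsf{DTML}$ reduction axioms in \cite{ALR} (and, on the $\mathsf{KDL}$ side, in \cite{ChristoffHansen2015}); arranging this measure so that both the main recursion and the simulation-lemma appeals strictly decrease is the delicate bookkeeping. The second obstacle is specific to $\ell$: one must check that the updated relation $\sim'_{a}$ is still an equivalence relation. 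Symmetry in particular relies on neighborhoods being $\sim_{a}$-invariant in $\mathsf{KDL}$ models (agents know who their neighbors are), which ensures the edge-condition yields a symmetric relation even though it is evaluated at the source world; this is the point at which a standing assumption of the $\mathsf{KDL}$ setting of \cite{ChristoffHansenProietti2016} must be invoked explicitly.
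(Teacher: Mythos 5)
Your proposal follows essentially the same route as the paper's proof: induction on the formula, with each dynamic case handled by a simulation lemma identifying $\mathsf{T}(M^{\dagger})$ with $\mathsf{T}(M)\otimes\Delta^{\dagger}$ (the paper phrases this as a bounded morphism $w\mapsto(w,e)$, which in both cases is exactly the bijection you describe, with the atomic clause verified via the \cite{ChristoffHansen2015} reduction axiom and the edge-condition clause verified against the link-cutting definition), followed by the same chain through the induction hypothesis on $\psi$. Your observation that the appeals to the statement on members of $\Phi$ and $\ell$ are not appeals to syntactic subformulas of $[\dagger]\psi$, so the induction must run on a well-founded complexity measure rather than formula length, is a genuine subtlety that the paper's proof glosses over with a bare ``i.h.''
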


\begin{proof} By induction on $\varphi$. We include  the cases for the dynamic modalities. 

Let $\varphi = [d]\psi$, where $d = (\Phi, post)$. We need to show that \[M,w,g(x) \models  [d]\psi \text{ iff } \mathsf{T}(M),w\models_g [\Delta^d,e^d]T_x(\psi)\] (the case for $T_y$ is analogous). Note that $M,w,g(x) \models [d]\psi$ iff $M^d,w,g(x)\models \psi$ iff (by i.h.) $\mathsf{T}(M^d),w\models_g T_x(\psi)$. We will show that $\mathsf{T}(M^d)$ and $\mathsf{T}(M)\otimes \Delta^d$ satisfy the same formulas. To prove this, we will show that there is a bounded morphism linking these two models (it is straightforward to show that term-modal formulas are preserved when this is the case, as in the propositional modal setting). Define $b: \mathsf{T}(W^d) \to \mathsf{T}(W^{\Delta^d})$ by $w \mapsto (w,e^d)$. We show that $b$ is a bounded morphism.
\begin{enumerate}
\item $w$ and $(w,e^d)$ satisfy the same basic formulas:

$\mathsf{T}(M^d),w\models_g T_x( \mathsf{f}\doteqdot\mathsf{z})$ 

iff (i.h.) $M^d, w, g(x) \models \mathsf{f}\doteqdot\mathsf{z}$ 

iff $M,w,g(x)\models [d] \mathsf{f}\doteqdot\mathsf{z}$ 

iff (reduction axiom for $[d] \mathsf{f}\doteqdot\mathsf{z}$) 

$M,w,g(x)\models \left(\bigvee_{\varphi\in\Phi: post(\varphi,\mathsf{f}) = \mathsf{z},  \mathsf{z}\in \mathsf{Z_{f}}} \varphi\right) \vee$ 

$\ \ \ \ \ \ \ \ \ \ \ \ \ \ \ \ \ \ \left(\neg \left( \bigvee_{\varphi\in\Phi: post(\varphi,\mathsf{f}) = \mathsf{z},  \mathsf{z}\in \mathsf{Z_{f}}} \varphi\right) \wedge  \mathsf{f}\doteqdot\mathsf{z} \right)$

iff (i.h., where we let $g(x) = a$ for some $a\in A$ named $\underline{a}$) 

$\mathsf{T}(M),w\models_g T_x(\left(\bigvee_{\varphi\in\Phi: post(\varphi,\mathsf{f}) = \mathsf{z},  \mathsf{z}\in \mathsf{Z_{f}}} \varphi\right) \vee$

$\ \ \ \ \ \ \ \ \ \ \ \ \ \ \ \ \ \ \left(\neg \left( \bigvee_{\varphi\in\Phi: post(\varphi,\mathsf{f}) = \mathsf{z},  \mathsf{z}\in \mathsf{Z_{f}}} \varphi\right) \wedge  \mathsf{f}\doteqdot\mathsf{z} \right))(x \mapsto \underline{a})$ 

iff (by definition of $\Delta^d$) $\mathsf{T}(M),w\models_g \mathsf{post}(e)(T_x( \mathsf{f}\doteqdot\mathsf{z})(x \mapsto \underline{a}))$ 

iff $\mathsf{T}(M)\otimes \Delta^d,(w,e^d)\models_g T_x( \mathsf{f}\doteqdot\mathsf{z})(x \mapsto \underline{a})$

iff (since $g(x) = a$ and $a$ is named $\underline{a}$) $\mathsf{T}(M)\otimes \Delta^d,(w,e^d)\models_g  \mathsf{f}\doteqdot\mathsf{z}$.

\item if $(w,v) \in \mathsf{T}(\sim^d_a)$ then $((w,e^d), (v,e^d))\in \mathsf{T}(\sim^{\Delta^d}_a)$: 

$(w,v)\in \mathsf{T}(\sim^d_a)$ iff $(w,v)\in \sim^d_a$ iff $(w,v)\in \sim_a$ iff $(w,v)\in \mathsf{T}(\sim_a)$ iff $(w,v)\in \mathsf{T}(\sim^{\Delta^d}_a)$ (since $\Delta^d$ does not change the accessibility relations).

\item if  $((w,e^d), (v',e^d))\in \mathsf{T}(\sim^{\Delta^d}_a)$ then there is $v$ such that $(w,v)\in \mathsf{T}(\sim^d_a)$ and $b(v) = (v',e^d)$: 

Reasoning as in step 2, $((w,e^d), (v',e^d))\in \mathsf{T}(\sim^{\Delta^d}_a)$ iff $(w,v')\in \mathsf{T}(\sim^d_a)$, and $b(v') = (v', e^d)$.

\end{enumerate}

Hence, $b$ is a bounded morphism, and $\mathsf{T}(M^d)$ and $\mathsf{T}(M)\otimes \Delta^d$ satisfy the same formulas. Thus,  $M,w,g(x) \models [d]\psi$ iff $M^d,w,g(x)\models \psi$ iff (by i.h.) $\mathsf{T}(M^d),w\models_g T_x(\psi)$ iff (bounded morphism) $\mathsf{T}(M)\otimes \Delta^d,(w,e^d)\models_g T_x(\psi)$ iff $\mathsf{T}(M),w\models T_x([d]\psi)$.

Next, let $\varphi = [\ell]\psi$. We need to show that \[M,w,g(x) \models  [\ell]\psi \text{ iff } \mathsf{T}(M),w\models_g \bigwedge_{e\in E^\ell}(\mathsf{pre}^\ell(e) \to [\Delta^\ell,e]T_x(\psi))\] (the case for $T_y$ is analogous). Note that $M,w,g(x) \models [\ell]\psi$ iff $M^\ell,w,g(x)\models \psi$ iff (by i.h.) $\mathsf{T}(M^\ell),w\models_g T_x(\psi)$. As in the previous case, we will show that $\mathsf{T}(M^\ell)$ and $\mathsf{T}(M)\otimes \Delta^\ell$ satisfy the same formulas by defining a bounded morphism linking the two. Note that the preconditions in $\Delta^\ell$ are pairwise inconsistent and jointly exhaustive, since each precondition corresponds to one way of assigning truth values to the formulas in $G_\ell$. Hence, for each $w\in \mathsf{T}(W)$, there is exactly one event $e^{val}$ such that $\mathsf{T}(M), w\models \mathsf{pre}^\ell(e^{val})$. Define $b: \mathsf{T}(W^\ell) \to \mathsf{T}(W^{\Delta^\ell})$ by $w \mapsto (w,e^{val})$. We show that $b$ is a bounded morphism.

\begin{enumerate}
\item $w$ and $(w,e^{val})$ satisfy the same basic formulas: 

This is clear from the fact that learning updates do not change the accessibility relations. $\mathsf{T}(M^\ell),w\models_g T_x( \mathsf{f}\doteqdot\mathsf{z})$ iff (i.h.) $M^\ell, w, g(x) \models \mathsf{f}\doteqdot\mathsf{z}$ iff  $M,w,g(x)\models \mathsf{f}\doteqdot\mathsf{z}$ iff (i.h.) $\mathsf{T}(M),w\models_g T_x(\mathsf{f}\doteqdot\mathsf{z})$ iff $\mathsf{T}(M)\otimes \Delta^\ell,(w,e^{val})\models_g T_x(\mathsf{f}\doteqdot\mathsf{z})$.

\item if $(w,v) \in \mathsf{T}(\sim^\ell_a)$ then $((w,e^{val}), (v,e^{val'}))\in \mathsf{T}(\sim^{\Delta^d}_a)$: 

As $\mathsf{T}(M)$ is weakly rigid, each agent has the same name in each equivalence class $[w]_{\sim_a}$ of $\sim_a$. In what follows, we let the name of any agent $o\in A$ in worlds of $[w]_{\sim_a}$ be $\underline{o}$. Now, $(w,v)\in \mathsf{T}(\sim_a^{\ell})$ iff  $w \sim_a^{\ell} v$ 

iff  $w \sim_a v \text{ and } \forall b\in A (N_w ab \Rightarrow \forall \varphi\in \ell ( M,w,b\models \varphi \text{ iff } M,v,b\models \varphi))$ 

iff  (contrapositive) $w \sim_a v \text{ and } \forall b\in A( \exists \varphi\in \ell( (M,w,b\models \varphi \text{ and } M,v,b\models \neg \varphi) \text{ or } (M,w,b\models \neg\varphi \text{ and } M,v,b\models \varphi))  \Rightarrow \neg N_w ab )$

iff (by i.h.) $(w,v)\in \mathsf{T}(\sim_a)$ and (by def. of $\mathsf{T}(M)\otimes\Delta^\ell$) 

$\mathsf{T}(M),w\models_g \mathsf{pre}(e^{val})$ and $\mathsf{T}(M),v\models_g  \mathsf{pre}(e^{val'})$ for some $val,val' \in \mathcal{V}_\ell$, and for all $\underline{b}\in\mathsf{C}$:

if there is a $\varphi\in \ell$ such that
\begin{align*}
\MoveEqLeft[3]
\bigl(\mathsf{T}(M),w\models_g T_x(\varphi)(x\mapsto \underline{b}) \text{ and } \mathsf{T}(M),v\models_g T_x(\neg \varphi)(x\mapsto \underline{b})\bigr)  \\
  & \text{or } \bigl(\mathsf{T}(M),w\models_g T_x(\neg\varphi)((x\mapsto \underline{b}))\text{ and } {M,v\models_gT_x(\varphi)(x\mapsto \underline{b})}\bigr)
\end{align*}
then 
\[\mathsf{T}(M),w\models_g \neg N(\underline{a},\underline{b})\]

 iff $(w,v)\in \mathsf{T}(\sim_a)$ and $\mathsf{T}(M),w\models_g \mathsf{pre}(e^{val})$ and $\mathsf{T}(M),v\models_g  \mathsf{pre}(e^{val'})$ for some $val,val' \in \mathcal{V}_\ell$ and (by def. of $\Delta^\ell$) $\mathsf{T}(M),w\models_{g[x^\star \mapsto a]} Q(e^{val}, e^{val'})$ iff $((w,e^{val}),(v,e^{val'}))\in  \mathsf{T}(\sim^{\Delta^\ell}_a)$.

\item if  $((w,e^{val}), (v',e^{val'}))\in \mathsf{T}(\sim^{\Delta^\ell}_a)$ then there is $v$ such that $(w,v)\in \mathsf{T}(\sim^\ell_a)$ and $b(v) = (v',e^{val'})$: 

Reasoning as in step 2, $((w,e^{val}), (v',e^{val'}))\in \mathsf{T}(\sim^{\Delta^d}_a)$ iff $(w,v')\in \mathsf{T}(\sim^\ell_a)$, and $b(v') = (v', e^{val'})$. 
\end{enumerate}
Hence, $b$ is a bounded morphism, and $\mathsf{T}(M^\ell)$ and $\mathsf{T}(M)\otimes \Delta^\ell$ satisfy the same formulas. Thus, $M,w,g(x) \models [\ell]\psi$ iff $M^d,w,g(x)\models \psi$ iff (by i.h.) $\mathsf{T}(M^\ell),w\models_g T_x(\psi)$ iff (bounded morphism) for the unique event $e^{val}$ such that $\mathsf{T}(M),w \models_g \mathsf{pre}^\ell(e^{val})$, we have $\mathsf{T}(M)\otimes \Delta^\ell, (w, e^{val}) \models_g T_x(\psi)$ iff  $\mathsf{T}(M), w\models_g \bigwedge_{e\in E^\ell}(\mathsf{pre}^\ell(e) \to [\Delta^{\ell},e]T_{x}(\varphi)$ iff $\mathsf{T}(M)\models T_x([\ell]\psi)$.

This completes the proof.

\end{proof}

With Prop. 2 embedding $\mathsf{KDL}$
in $\mathsf{DTML}$, it remains to show that there is a complete and
decidable system for the image of $\mathsf{KDL}$. Up to translation,
such a logic is then a logic for the class of $\mathsf{KDL}$ models. To state the result, denote the $\mathsf{TML}$ image of the class
of $n$-agent $\mathsf{KDL}$ models by $\mathsf{T}(\mathsf{KDL}_{n})$. We now define a set of formulas, $\mathsf{F_n}$, which can be shown to characterise the class $\mathsf{T}(\mathsf{KDL}_{n})$.

\begin{definition}
Let $\mathsf{F}_{n}\subseteq\mathcal{L}(\Sigma_{n}(\mathsf{FP},Nom)+[\Delta])$
be the logic extending the term-modal $\mathsf{S5}$ logic with the reduction axioms for action models $(\Delta^{\dagger},e^{\dagger}),$
$\dagger\in\mathsf{D}\cup\mathsf{L}$ (defined in  \cite{ALR}), as well as the following static axioms:
\begin{enumerate} 
\item There are $n$ agents and they are all named:
\begin{align*}
\mathsf{Named_{n}}\coloneqq\exists x_{1},...,x_{n} & ( \left(\bigwedge_{i,j\leq n,i\neq j}x_{i}\neq x_{j}\right)\wedge\forall y\left(\bigvee_{i\leq n}y=x_{i}\right)\ \wedge\\
 & \left(\bigwedge_{i,j\leq n,i\neq j}c_{i}\neq c_{j}\right)\wedge\left(\bigwedge_{i\leq n}x_{i}=c_{i}\right) )
\end{align*}
\item Weak rigidity (Def. 8):
\[ \mathsf{Rig_{n}}\coloneqq  \bigwedge_{c\in\mathtt{C}}\forall x ((c=x) \rightarrow \forall y (K_y (c=x)))\]
\item The neighbour relation is irreflexive and symmetric: 
\[\mathsf{Neigh}\coloneqq \forall x \forall y (\neg N(x,x) \wedge (N(x,y)\leftrightarrow N(y,x)))\]
\item Agents know their neighbors: $\mathsf{KnowNeigh}\coloneqq \forall x \forall y(N(x,y) \leftrightarrow K_x N(x,y))$
\end{enumerate}
\end{definition}

We then obtain the result:

\begin{proposition} \label{prop: char} $\mathsf{F}_{n}$ statically
characterizes $\mathsf{T}(\mathsf{KDL}_{n})$.

\begin{proof}
By model-checking of the formulas in $\mathsf{F}_{n}$.
\end{proof}
\end{proposition}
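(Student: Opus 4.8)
The statement is a model-class characterization, so the plan is to prove the biconditional: an $\mathcal{L}(\Sigma_{n}(\mathsf{FP},Nom))$-model $M=(A,W,\sim,I)$ — which by Definition~\ref{model} already carries equivalence relations and so is a term-modal $\mathsf{S5}$ model — lies in $\mathsf{T}(\mathsf{KDL}_{n})$ if and only if $M$ validates the four static axioms $\mathsf{Named}_{n}$, $\mathsf{Rig}_{n}$, $\mathsf{Neigh}$ and $\mathsf{KnowNeigh}$. I would prove the two directions separately, in each case unfolding the satisfaction clauses of Definition~\ref{def:satisfaction} against the defining clauses of Definition~\ref{def: mathsf.T.(M)}.

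For the soundness direction, I take an arbitrary $n$-agent $\mathsf{KDL}$ model $M'$ and verify that each axiom holds at every world of $\mathsf{T}(M')$. $\mathsf{Named}_{n}$ is witnessed by the $n$ agents of $M'$ together with the constants naming them bijectively; $\mathsf{Rig}_{n}$ is exactly clause~(1) of Definition~\ref{def: mathsf.T.(M)} (weak rigidity) rendered in the object language, so it follows immediately from the semantics of $K_{y}$ and $\doteq$; and $\mathsf{Neigh}$, $\mathsf{KnowNeigh}$ transcribe the $\mathsf{KDL}$ requirements that each $N_{w}$ is irreflexive and symmetric and that neighborhoods are stable along an agent's own $\sim$-links, which transfer through clause~(3), $I(N,w)=N_{w}$. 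Each check is a one-line model computation — this is what the ``model-checking'' proof refers to.

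For the converse, given a model $M$ of the four axioms I reconstruct a $\mathsf{KDL}$ model $M'$ with $\mathsf{T}(M')=M$. From $\mathsf{Named}_{n}$ I read off $|A|=n$ and that $c\mapsto I(c,w)$ is a bijection $\mathtt{C}\to A$ at each $w$; I then set $N_{w}:=I(N,w)$ and $V(p):=\{(w,a):a\in I(p,w)\}$, so that clauses~(2) and~(3) of Definition~\ref{def: mathsf.T.(M)} hold by construction, and I use $\mathsf{Neigh}$ and $\mathsf{KnowNeigh}$ to certify that $M'=(A,W,(N_{w})_{w},\sim,g,V)$ really is a legitimate irreflexive-symmetric, neighbor-aware $n$-agent $\mathsf{KDL}$ model, with $g$ induced by the constant naming and clause~(1) supplied by $\mathsf{Rig}_{n}$.

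The routine part is the per-axiom computation; the one delicate point — and the step I expect to be the main obstacle — is reconciling the $\mathsf{KDL}$ naming apparatus (a single \emph{global}, genuinely rigid nominal assignment $g$) with the weaker \emph{edge}-rigidity that $\mathsf{Rig}_{n}$ enforces (constancy of $I(c,\cdot)$ only along single $\sim_{a}$-links). To make the induced $g$ well defined and obtain $\mathsf{T}(M')=M$ on the nose, I would restrict attention to point-generated models, where every world is reachable through $\bigcup_{a}\sim_{a}$ and edge-rigidity propagates to global rigidity, collapsing the gap. I would also double-check that the four axioms jointly exhaust the defining conditions of $\mathsf{KDL}_{n}$ — in particular whether feature-functionality (each agent taking exactly one value per feature at each world) is part of the $\mathsf{KDL}$ model definition and, if so, whether it must be added to the list, since it is not stated among the static axioms.
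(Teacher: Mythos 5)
Your proposal is, in substance, the expansion of what the paper leaves as a one-line proof: the paper's entire argument is ``by model-checking of the formulas in $\mathsf{F}_{n}$'', i.e.\ exactly the direct semantic verification you carry out, so you are on the same route rather than a different one. What you add beyond the paper is the explicit converse direction (reconstructing a $\mathsf{KDL}$ model from an arbitrary model of the axioms), which is genuinely required for a \emph{characterization} claim and which the paper's phrasing only gestures at. The two delicate points you flag are real and are not addressed by the paper: $\mathsf{Rig}_{n}$ only enforces constancy of $I(\underline{c},\cdot)$ along individual $\sim_{a}$-edges, which matches clause~1 of Definition~\ref{def: mathsf.T.(M)} but falls short of the single global nominal assignment $g$ of a $\mathsf{KDL}$ model on non-connected models, so some move like restricting to point-generated models (or reading the characterization up to such generation) is needed; and none of the four axioms expresses feature-functionality (each agent having exactly one value per feature per world), so if that is part of the $\mathsf{KDL}$ model definition the axiom list as printed does not literally characterize $\mathsf{T}(\mathsf{KDL}_{n})$ and a schema such as $\forall x\bigl(\bigvee_{\mathsf{z}\in\mathsf{Z_f}}(\mathsf{f}\doteqdot\mathsf{z})(x)\bigr)$ together with pairwise exclusivity would have to be added. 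Your proof plan is correct; these two caveats are points where the paper's terse proof, not your argument, is incomplete.
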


Which we can use to state completeness:

\begin{theorem}
For any $n\in\mathbb{N}$, the logic $\mathsf{F}_{n}$ is sound, strongly
complete and decidable w.r.t. $\mathsf{T}(\mathsf{KDL}_{n})$.
\end{theorem}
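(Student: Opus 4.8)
The plan is to establish the three properties separately, in each case leaning on the general term-modal $\mathsf{S5}$ machinery of \cite{ALR}, the reduction axioms that eliminate the $[\Delta]$-modalities, and the static characterization of Prop. \ref{prop: char}. For \emph{soundness} I would check that every axiom and rule of $\mathsf{F}_n$ preserves validity on $\mathsf{T}(\mathsf{KDL}_n)$: the term-modal $\mathsf{S5}$ axioms are sound because each $\sim_a$ is an equivalence relation; the four static axioms $\mathsf{Named_n},\mathsf{Rig_n},\mathsf{Neigh},\mathsf{KnowNeigh}$ are valid on $\mathsf{T}(\mathsf{KDL}_n)$ by Prop. \ref{prop: char}; and the reduction axioms for $(\Delta^{\dagger},e^{\dagger})$ inherit soundness from \cite{ALR}, \emph{provided} $\mathsf{T}(\mathsf{KDL}_n)$ is closed under the updates $\otimes\Delta^{\dagger}$. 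I would isolate this closure as a lemma: since each $\dagger\in\mathsf{D}\cup\mathsf{L}$ maps an $n$-agent $\mathsf{KDL}$ model to an $n$-agent $\mathsf{KDL}$ model, Prop. \ref{prop: truth-preserving-dynamic} gives that $\mathsf{T}(M)\otimes\Delta^{\dagger}$ and $\mathsf{T}(M^{\dagger})$ validate the same formulas, so the update stays (up to the bounded morphism already exhibited) inside the class.

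For \emph{strong completeness} I would proceed in two steps. First, the reduction axioms yield, by the standard outside-in rewriting of \cite{ALR}, a computable translation $t(\cdot)$ sending every $\varphi\in\mathcal{L}(\Sigma_{n}(\mathsf{FP},Nom)+[\Delta])$ to a $[\Delta]$-free static formula with $\vdash_{\mathsf{F}_n}\varphi\leftrightarrow t(\varphi)$. Hence any $\mathsf{F}_n$-consistent set $\Gamma$ has a consistent static image $t[\Gamma]$, and because soundness makes each equivalence valid throughout the class, a point $(M,w)$ with $M\in\mathsf{T}(\mathsf{KDL}_n)$ satisfying $t[\Gamma]$ also satisfies $\Gamma$; so it suffices to prove strong completeness for the static fragment. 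Second, I would build the canonical model $M^{c}$ for the static part of $\mathsf{F}_n$ via the term-modal $\mathsf{S5}$ canonical construction of \cite{ALR}. By the truth lemma every theorem is true at every point of $M^{c}$; in particular the closed sentences $\mathsf{Named_n},\mathsf{Rig_n},\mathsf{Neigh},\mathsf{KnowNeigh}$ hold globally, so $M^{c}$ validates all the static axioms, and by Prop. \ref{prop: char} this forces $M^{c}\in\mathsf{T}(\mathsf{KDL}_n)$. Thus every consistent static set is satisfiable in the class, which together with the first step delivers strong completeness.

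The hard part will be the canonical-model step, specifically ensuring the four static axioms are genuinely \emph{canonical} for the term-modal setting, where domain size and designator rigidity are exactly the features that complicate first-order modal completeness. The crucial observations are that $\mathsf{Named_n}$ pins the canonical domain to exactly $n$ elements, each named by one of the constants $c_1,\dots,c_n$ (so the usual issues with varying or infinite domains collapse), while $\mathsf{Rig_n}$ delivers weak rigidity along every $\sim_a$, $\mathsf{Neigh}$ the irreflexive symmetric network, and $\mathsf{KnowNeigh}$ the positive introspection of neighbors. Rather than reverse-engineering explicit frame conditions in $M^{c}$ one by one, I would route the argument through Prop. \ref{prop: char}: global truth of the characterizing sentences in $M^{c}$ is precisely membership in $\mathsf{T}(\mathsf{KDL}_n)$, so canonicity reduces to the truth lemma plus the characterization.

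For \emph{decidability} I would reduce $\mathsf{T}(\mathsf{KDL}_n)$-satisfiability to multi-agent $\mathsf{S5}_n$. Given $\varphi$, first compute the static $t(\varphi)$. On any model of the class, $\mathsf{Named_n}$ makes the domain finite and entirely named, and since the only world-shifting operators are the $K_t$, across which $\mathsf{Rig_n}$ makes every constant rigid, one may soundly replace each $\forall x\,\psi$ by $\bigwedge_{i\leq n}\psi(x\mapsto c_i)$ and treat the remaining ground atoms $P(c_i),N(c_i,c_j)$ as propositional letters. This yields an equisatisfiable formula of propositional $\mathsf{S5}_n$, which has the finite model property and decidable satisfiability; combined with the computability of $t(\cdot)$ and the recursive axiomatization of $\mathsf{F}_n$, decidability follows.
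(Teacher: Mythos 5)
Your proposal is correct and follows the same top-level skeleton as the paper's proof: both rest on Prop.~\ref{prop: char} (the static characterization of $\mathsf{T}(\mathsf{KDL}_n)$ by $\mathsf{F}_n$), on the reduction axioms eliminating the $[\Delta]$-modalities, and on general completeness/decidability results for term-modal logic. The difference is one of granularity and, in one place, of method. The paper's proof is a three-line sketch that delegates everything to three cited results of \cite{ALR}: (1) any axiomatic extension of term-modal $\mathsf{K}$ is strongly complete for the class its axioms characterize, (2) such a logic is decidable when the characterized class has finitely many agents, and (3) dynamic formulas reduce provably to static ones. You essentially reconstruct what these citations must contain: your canonical-model argument (closed axioms lie in every maximal consistent set, hence hold globally in $M^c$ by the truth lemma, hence $M^c\in\mathsf{T}(\mathsf{KDL}_n)$ by Prop.~\ref{prop: char}) is the content of result (1) specialized to this case, and your outside-in rewriting is result (3). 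Two things you add are genuinely worthwhile: first, the explicit closure lemma --- that $\mathsf{T}(\mathsf{KDL}_n)$ is closed under $\otimes\Delta^{\dagger}$ up to the bounded morphisms of Prop.~\ref{prop: truth-preserving-dynamic} --- which the paper needs implicitly for the reduction axioms to be sound \emph{relative to the class} but never states; second, your decidability argument is a different and more informative route than the paper's appeal to result (2): you ground quantifiers via $\mathsf{Named_n}$, use $\mathsf{Rig_n}$ to justify substituting constants under the $K_t$ operators, and reduce to propositional $\mathsf{S5}_n$, which makes the decision procedure concrete rather than existential. The paper's version buys brevity and generality (it works for any finitely-agented characterized class); yours buys a self-contained and effective argument.
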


\begin{proof}
[sketch] By Prop. \ref{prop: char}, $\mathsf{F}_{n}$ statically
characterizes $\mathsf{T}(\mathsf{KDL}_{n})$. The result then follows
from three results from \cite{ALR}: 1. Any extension of the term-modal
logic $\mathsf{K}$ with axioms $\mathsf{A}$ is strongly complete
with respect to the class of frames characterized by $\mathsf{A}$,
and 2. If $\mathsf{A}$ characterizes a class with finitely many agents,
then the logic is also decidable, and 3. any dynamic $\mathsf{DTML}$
formula is provably equivalent to a static $\mathsf{DTML}$ formula
using reduction axioms. 

Thus, since $\mathsf{F}_{n}$ characterizes $\mathsf{T}(\mathsf{KDL}_{n})$, which is a class with finitely many agents, and all dynamic axioms in $\mathsf{F}_n$ are probably equivalent to static $\mathsf{DTML}$, it follows that $\mathsf{K}+ \mathsf{F}_n$ is strongly complete and decidable with respect to $\mathsf{T}(\mathsf{KDL}_{n})$.
\end{proof}

\section{Final Remarks\label{sec:final}}
This paper has showcased $\mathsf{DTML}$ as a framework for modeling
social networks, their epistemics and dynamics, including examples
in which uncertainty about name reference and \emph{de dicto/de re }distinctions are key to modelling information flow and network change correctly.
It was shown that $\mathsf{DTML}$ may encode the popular hybrid logical
models of epistemic networks, and that $\mathsf{DTML}$ may be used
to obtain completeness for an open-question dynamics through emulation.

We are very interested in learning how $\mathsf{DTML}$ relates to
$\mathsf{GDDL}$ with respect to the encodable dynamics. We have been
able to emulate the updates used in the examples of \cite{Seligman2013}, but the general question is open. Further, the
statics of frameworks that describe networks using propositional logic
\cite{ChristoffRendsvig2014,Baltag_etal_2018} must be $\mathsf{DTML}$
encodable, and we expect the name about their updates, where reduction axioms exist. This raises two questions: if we can show this by a general
results instead of piecemeal, and whether principled $\mathsf{DTML}$
action models exist for classes of updates. E.g., the \emph{threshold
update }of \cite{Baltag_etal_2018} gives an agent's property $P$
if a given fraction of neighbors are $P$; for a fixed agent set,
this is $\mathsf{DTML}$ encodable by using the reduction axioms of
\cite{Baltag_etal_2018} to provide pre- and postconditions. For a
principled update, however, seemingly we need a generalized quantifier (e.g., a Rescher quantifier). If so, the general update form is not
$\mathsf{DTML}$ encodable. Classification results like these would add valuable insights on network logics.


\begin{thebibliography}{9}


\bibitem{ALR}
A.~Achen, A.~O. Liberman, and R.~K. Rendsvig.
\newblock {Dynamic Term-Modal Logics for Epistemic Planning}.
\newblock {\em under review}. arXiv:1906.06047,  2019.

\bibitem{Baltag_etal_2018}
A.~Baltag and Z.~Christoff.
\newblock {Dynamic Epistemic Logics of Diffusion and Prediction in Social
  Networks}.
\newblock {\em Studia Logica}, 2018.

\bibitem{baltagmoss2004}
A.~Baltag and L.~S. Moss.
\newblock {Logics for Epistemic Programs}.
\newblock {\em Synthese}, 139(2):165--224, 2004.

\bibitem{BaltagBMS_1998}
A.~Baltag, L.~S. Moss, and S.~S. Solecki.
\newblock {The Logic of Public Announcements, Common Knowledge, and Private
  Suspicions}.
\newblock In {\em TARK '98}, p. 43--56. 1998.

\bibitem{benthem2006_com-change}
J.~{\VAN{Benthem}{van}}~Benthem, J.~{\VAN{Eijck}{van}}~Eijck, and B.~Kooi.
\newblock {Logics of communication and change}.
\newblock {\em Information and Computation}, 204(11):1620--1662, 2006.

\bibitem{Bolander2014}
T.~Bolander.
\newblock {Seeing is Believing: Formalising False-Belief Tasks in Dynamic
  Epistemic Logic}.
\newblock {\em ECSI 2014}, volume 1283, p. 87--107.

\bibitem{bolanderbirkegaard2011}
T.~Bolander and M.~B. Andersen.
\newblock {Epistemic planning for single- and multi-agent systems}.
\newblock {\em Journal of Applied Non-Classical Logics}, 21(1):9--34, 2011.

\bibitem{brauner2011hybrid}
T.~Brauner.
\newblock {\em Hybrid logic and its proof-theory}.
\newblock Springer, 2011.

\bibitem{Christoff2016}
Z.~Christoff.
\newblock {\em Dynamic Logics of Networks}.
\newblock PhD thesis, U.
  of Amsterdam, 2016.

\bibitem{ChristoffHansen2013}
Z.~Christoff and J.~U. Hansen.
\newblock A two-tiered formalization of social influence.
\newblock {\em LORI 2013}, volume
  8196 of {\em LNCS}, 68--81. Springer,
  2013.

\bibitem{ChristoffHansen2015}
Z.~Christoff and J.~U. Hansen.
\newblock {A Logic for Diffusion in Social Networks}.
\newblock {\em Journal of Applied Logic}, 13:48--77, 2015.

\bibitem{ChristoffHansenProietti2016}
Z.~Christoff, J.~U. Hansen, and C.~Proietti.
\newblock {Reflecting on social influence in networks}.
\newblock {\em Journal of Logic, Language and Information}, 25:299--333, 2016.

\bibitem{ChristoffNaumov2018}
Z.~Christoff and P.~Naumov.
\newblock {Diffusion in social networks with recalcitrant agents}.
\newblock {\em Journal of Logic and Computation}, 29(1):53--70, 12 2018.

\bibitem{ChristoffRendsvig2014}
Z.~Christoff and R.~K. Rendsvig.
\newblock {Dynamic Logics for Threshold Models and their Epistemic Extension}.
\newblock {\em Proc. of ELISIEM}, 2014.

\bibitem{ditmarsch_kooi_ontic}
H.~{\VAN{Ditmarsch}{van}}~Ditmarsch and B.~Kooi.
\newblock {Semantic Results for Ontic and Epistemic Change}.
\newblock In {\em Logic and the Foundations of Game and Decision Theory (LOFT
  7)}, Texts in Logic and Games, Vol. 3, pages 87--117. Amsterdam University
  Press, 2008.

\bibitem{TML_Hintikka1962}
J.~Hintikka.
\newblock {\em {Knowledge and Belief: An Introduction to the Logic of the Two
  Notions}}.
\newblock College Publications, 2nd, 2005 edition, 1962.

\bibitem{TML_KooiTermModalLogic}
B.~Kooi.
\newblock {Dynamic term-modal logic}.
\newblock In {\em LORI 2007}, Texts in Computer Science 8, p. 173--185, 2007.

\bibitem{Liang2011}
L.~Zhen and J.~Seligman.
\newblock {A Logical Model of the Dynamics of Peer Pressure}.
\newblock {\em Electronic Notes in Theoretical Computer Science},
  278(0):275--288, 2011.

\bibitem{Liu2014}
F.~Liu, J.~Seligman, and P.~Girard.
\newblock {Logical Dynamics of Belief Change in the Community}.
\newblock {\em Synthese}, 191(11):2403--2431, 2014.

\bibitem{Rendsvig2014}
R.~K. Rendsvig.
\newblock {Diffusion, Influence and Best-Response Dynamics in Networks: An
  Action Model Approach}.
\newblock In {\em ESSLLI 2014 Student Session}, p. 63--75. arXiv:1708.01477, 2014.

\bibitem{Seligman2011}
J.~Seligman, F.~Liu, and P.~Girard.
\newblock {Logic in the Community}.
\newblock In {\em Logic and Its
  Applications}, p. 178--188. Springer, 2011.

\bibitem{Seligman2013}
J.~Seligman, F.~Liu, and P.~Girard.
\newblock {Facebook and the epistemic logic of friendship}.
\newblock In {\em TARK 2013}, p. 229--238, 2013.

\bibitem{girard2012general}
J.~Seligman, F.~Liu, and P.~Girard.
\newblock {General Dynamic Dynamic Logic}.
\newblock In {\em Advances in Modal Logic}, vol. 9, p. 239--260. Springer, 2012.

\bibitem{SmetsVelazquez2017}
S.~Smets and F.~R. Valezquez-Quesada.
\newblock {How to Make Friends: A Logical Approach to Social Group Creation}.
\newblock In {\em LORI 2017}, {\em LNCS} vol. 10455, p. 377--390, 2017.

\bibitem{SmetsVelazquez2018a}
S.~Smets and F.~R. Valezquez-Quesada.
\newblock In {\em Dynamic Logic. New
  Trends and Applications (DALI 2017)}, {\em LNCS} vol. 10669, p. 171--184. Springer,
  2018.

\end{thebibliography}
\end{document}